\documentclass[letterpaper,twocolumn,10pt]{article}
\usepackage{usenix-2020-09}

\newif \ifcomments \commentstrue
\newif \iffull \fullfalse

\usepackage{xcolor}
\usepackage{url}

\definecolor{newdiffcolor}{rgb}{0, 0, 0}
\newcommand{\newdiff}[1]{{\color{newdiffcolor} #1}}

\usepackage{amsfonts,amsmath}
\usepackage{amsthm}

\ifcomments
	\newcommand{\mahimna}[1]{\textsf{\small{\color{violet!80}{[Mahimna: {#1}]}}}}
\else
	\newcommand{\mahimna}[1]{}
\fi

\iffull \newcommand{\mypara}{\paragraph}
\else \newcommand{\mypara}[1]{\smallskip\noindent\textbf{#1}\;} \fi

\renewcommand{\paragraph}[1]{\smallskip\noindent\textbf{#1}\;}

\iffull \newtheorem{theorem}{Theorem}[section]
\else \newtheorem{theorem}{Theorem} \fi
\newtheorem{corollary}{Corollary}[theorem]

\theoremstyle{definition}
\newtheorem{definition}[theorem]{Definition}

\usepackage{graphicx} 
\usepackage[utf8]{inputenc}
\usepackage{changepage}
\usepackage{xurl}
\usepackage{hyperref}
\usepackage{xcolor}
\usepackage{amsfonts}
\usepackage{mathtools}
\usepackage{amsthm}
\usepackage{amssymb}
\usepackage{cleveref}
\usepackage{xspace}
\usepackage{array, caption, tabularx,  ragged2e,  booktabs}
\captionsetup{justification=centering}
\usepackage{physics}
\usepackage{amsmath}
\usepackage{tikz}
\usepackage{mathdots}
\usepackage{yhmath}
\usepackage{cancel}
\usepackage{color}
\usepackage{siunitx}
\usepackage{array}
\usepackage{multirow}
\usepackage{gensymb}
\usepackage{tabularx}
\usepackage{extarrows}
\usepackage{booktabs}
\usetikzlibrary{fadings}
\usetikzlibrary{patterns}
\usetikzlibrary{shadows.blur}
\usetikzlibrary{shapes}
\usetikzlibrary{positioning,arrows}

\usepackage{algorithm}
\usepackage{algpseudocode}
\usepackage{tablefootnote}
\usepackage{enumitem}
\usepackage[many]{tcolorbox}
\usepackage{makecell}

\usepackage[available,functional,reproduced]{usenixbadges}

\tcbset{
    sharp corners,
    colback = white,
    before skip = 0.2cm,    
    after skip = 0.5cm      
}                           %

\ifcomments
    \newcommand{\ari}[1]{{\small\textsf{\color{red}{[Ari: {#1}]}}}}
    \newcommand{\andres}[1]{{\small\textsf{\color{blue}{[Andres: {#1}]}}}}
    \newcommand{\james}[1]{{\small\textsf{\color{olive}{[James: {#1}]}}}}
    \newcommand{\kushal}[1]{{\small\textsf{\color{orange}{[Kushal: {#1}]}}}}
    \newcommand{\sarah}[1]{{\small\textsf{\color{brown}{[Sarah: {#1}]}}}}
    \newcommand{\jay}[1]{{\small\textsf{\color{cyan}{[Jay: {#1}]}}}}
   
\else
    \newcommand{\ari}[1]{}
    \newcommand{\andres}[1]{}
    \newcommand{\mahimna}[1]{}
    \newcommand{\james}[1]{}
    \newcommand{\kushal}[1]{}
    \newcommand{\jay}[1]{}
    \newcommand{\sarah}[1]{}

\fi

\newcommand{\util}{{\textsf{util}}}
\newcommand{\true}{{\texttt{true}}}
\newcommand{\false}{{\texttt{false}}}
\newcommand{\tokens}
{{\textsf{tokens}}}
\newcommand{\bribe}
{{\textsf{bribe}}}
\newcommand{\vote}
{{\textsf{vote}}}

\newcommand{\qv}
{{\textsf{quad}}}
\newcommand{\players}{{\mathcal{P}}}
\newcommand{\player}{\ensuremath{P}}

\newcommand{\apath}{\mathcal{A}}

\newcommand{\indexshort}{\textsf{VBE}\xspace}
\newcommand{\oindexshort}{\textsf{oVBE}\xspace}
\newcommand{\indexlong}{Voting-Bloc Entropy\xspace}
\newcommand{\indexlongbold}{\textbf{V}oting-\textbf{B}loc \textbf{E}ntropy\xspace}
\newcommand{\hypothesisname}{Learning Hypothesis of DAOs\xspace}
\newcommand{\hypothesisnameshort}{LHD\xspace}
\newcommand{\clustershort}{$\epsilon$-TOC\xspace}

\newcommand{\sgn}[1]{\text{sgn}(#1)}

\newcommand{\votehistory}{\mathcal{H}}

\theoremstyle{definition}
\newtheorem{exmp}{Example}

\newtheorem{thm}{Theorem}[section]

\newtcolorbox{boxA}{
    fontupper = \bf,
    boxrule = 1.5pt,
    colframe = black %
}

\newlength{\saveparindent}
\setlength{\saveparindent}{\parindent}
\newlength{\saveparskip}
\setlength{\saveparskip}{\parskip}

\newcounter{ctr}

\newenvironment{newitemize}{%
\begin{list}{\mbox{}\hspace{5pt}$\bullet$\hfill}{\labelwidth=15pt%
\labelsep=5pt \leftmargin=20pt \topsep=3pt%
\setlength{\listparindent}{\saveparindent}%
\setlength{\parsep}{\saveparskip}%
\setlength{\itemsep}{0.8pt} }}{\end{list}}

\begin{document}

\date{}

\title{Voting-Bloc Entropy: A  New Metric for DAO Decentralization}

\author{
  \rm Andrés Fábrega$^{1,2}$, Amy Zhao$^{2}$, Jay Yu$^{4}$, James Austgen$^{1,2}$, Sarah Allen$^{2, 3}$, \\ \rm Kushal Babel$^{1, 2}$, Mahimna Kelkar$^{1,2}$, Ari Juels$^{1, 2}$\\
  $^{1}$  Cornell Tech \hspace*{2.5em} $^{2}$ IC3 \hspace*{2.5em} $^{3}$ Flashbots \hspace*{2.5em} $^{4}$ Stanford University
} 

\maketitle

\begin{abstract}
  Decentralized Autonomous Organizations (DAOs) use smart contracts to foster communities working toward common goals. Existing definitions of decentralization, however---the `D' in DAO---fall short of capturing the key properties characteristic of diverse and equitable participation. 

This work proposes a new framework for measuring DAO decentralization called \indexlongbold (\indexshort, pronounced ``vibe''). \indexshort is based on the idea that voters with closely aligned interests act as a centralizing force and should be modeled as such. \indexshort formalizes this notion by measuring the similarity of participants’ utility functions across a set of voting rounds. Unlike prior, ad hoc definitions of decentralization, \indexshort derives from first principles: We introduce a simple (yet powerful) reinforcement learning-based conceptual model for voting, that in turn implies \indexshort.

We first show \indexshort's utility as a theoretical tool. We prove a number of results about the (de)centralizing effects of vote delegation, proposal bundling, bribery, etc.~that are overlooked in previous notions of DAO decentralization. Our results lead to practical suggestions for enhancing DAO decentralization. 

We also show how \indexshort can be used empirically by presenting measurement studies and \indexshort-based governance experiments. We make the tools we developed for these results available to the community in the form of open-source artifacts in order to facilitate future study of DAO decentralization.

\end{abstract}

\section{Introduction}
\label{sec:introduction}
A Decentralized Autonomous Organization (DAO) is an entity or community that operates based on rules encoded and executed on a public blockchain~\cite{buterin2013bootstrapping,hassan2021decentralized}. DAOs can serve many goals, including 
investment~\cite{Jentzsch2016decentralized}, 
grant distribution~\cite{MolochDAO:2023}, gaming-guild 
organization~\cite{Avocado:2023,GuildFi:2023}, and ecosystem
governance~\cite{Uniswap:2023,Optimism:2023}. DAOs play a prominent role in blockchain ecosystems, and are rising rapidly in popularity: at the time of writing (Aug.~2024), the aggregate value across all DAO treasuries exceeds \$22 billion~\cite{DeepDAO:2023}.

As the name suggests, a DAO's governance is decentralized, meaning that decision-making does not rely on a single individual or highly concentrated authority---in contrast to, e.g., a corporation, where a CEO and board of directors make major decisions. Instead, decisions in a DAO are typically made through community votes on proposals, the outcomes of which are enforced automatically by the blockchain in which the DAO's rules, i.e., its \emph{smart contract}, reside. 

Decentralization is a core property that DAOs strive for, as diverse and equitable participation are fundamental ideals in these communities (and the blockchain ecosystem more broadly)~\cite{daos}. Most DAOs have their own associated crypto assets (or ``tokens'') and weigh voting power by token holdings. However, it is common for vote outcomes to be determined by a small set of ``whales''---a colloquial term used to denote the largest token holders. Such centralization, as well as low voting participation, are a pervasive source of concern in DAO communities. 

Decentralization is of critical importance and concern not just when it comes to successful governance of DAOs but also DAO \emph{security}. As noted in~\cite{feichtinger2024attacks}, ``if a large (delegated) token supply is held only by a few addresses
or entities, many attack vectors become more likely to succeed.'' Poorly decentralized DAOs have historically been at higher risk of various governance attacks, including cabals pushing through adversarial proposals, e.g.,~\cite{GoldenBoys:2024}, attackers plundering treasuries~\cite{feichtinger2024attacks,Malwa:2023}, and systemic voter bribery~\cite{bribery}---for which active marketplaces exist today~\cite{lloyd2023emergent}.

Academic works and DAO community participants have studied DAOs~\cite{fritsch2022analyzing,feichtinger2023hidden,feichtinger2024attacks,philquadvotingdaos,sharma2024unpacking} and recommended ways to improve their decentralization~\cite{Machiavelli:2023}. To do so effectively, though, requires an ability to \textit{model} and \textit{measure} decentralization in a  way that is reflective of a broad set of real-world concerns. These requirements motivate our work in this paper.

\paragraph{DAO decentralization: previous attempts.} The most common basis for evaluating decentralization in DAOs and other blockchain settings is \textit{token ownership}, specifically the distribution of assets and consequently voting rights among participants~\cite{sharma2024unpacking,fritsch2022analyzing}. Informally, concentration of a large fraction of tokens in a small number of hands---and thus the ability of a small group to determine voting outcomes---is indicative of strong centralization. More widespread distribution, conversely, suggests decentralization. Prior decentralization measures generally formalize this intuition by computing a particular function over the distribution of tokens among individual members of the DAO, such as entropy,\footnote{Entropy is typically defined over a random variable. A token ownership distribution may be viewed as a random variable for an experiment where a token is selected uniformly at random and its owner is output.} the Gini coefficient~\cite{Gini:2023} or the Nakamoto 
coefficient~\cite{nakamotocoefficient}.

Token ownership distribution across individual accounts has serious shortcomings as a framework for decentralization, however. To begin with, it is visible on chain only in terms of per-address holdings, not control by real-world individuals. For instance, an individual who holds 51\% of tokens in a DAO, but spreads them among a large number of addresses, could create an appearance of decentralization while having majority control. Even if tokens are held by distinct entities, a notion put forward in, e.g.,~\cite{karakostas2022sok}, those entities may have \emph{aligned interests and act in concert}---a form of centralization. The following examples illustrate cases in which a DAO may be strongly centralized, \textit{even if token ownership appears to imply strong decentralization}.

\begin{exmp}[Low participation / apathy]
\label{exmp:participation}
Lack of participation in DAO governance votes is widespread in practice~\cite{daoapathy} and induces a form of centralization. Consider, for example, a DAO proposal where half of voters do not vote and voters other than whales vote ``yes'' by a 2:1 margin. Whales with just 12.6\% of all tokens can swing the vote and force a ``no'' vote---an example of centralized power. 
\end{exmp}
    
\begin{exmp}[Herding]
\label{exmp:herding}
Interviews with DAO participants reveal a tendency to vote in alignment with influential community members to preserve reputation~\cite{sharma2024unpacking}, as individual votes are today usually publicly observable. This effect---often called 
\textit{herding}~\cite{alon2012sequential}---has a centralizing effect. It aligns votes around the choices of a few participants. (This problem is similar to ``herding'' in 
classical voting theory~\cite{gonzalez2006herding}.)
\end{exmp}

\begin{exmp}[Bribery / vote-buying]
\label{exmp:bribery}
Bribery---specifically, \textit{vote-buying}---has been a longstanding concern of DAO organizers~\cite{bribery,Eluke:2024}. It has a centralizing effect, as it aligns voters around a choice dictated by the briber.
\end{exmp}

Recognizing that token-ownership alone doesn't give a full picture of decentralization, 
researchers have explored broader notions. Most notably, Sharma et al.~\cite{sharma2024unpacking} have considered entropy measures limited to those voters who participate in votes, and have also explored graph-based representations of voting patterns (degree centralization, degree assortativity, etc.).  
Token-ownership distribution among voting participants overlooks important issues, such as those in Examples~\ref{exmp:herding} and~\ref{exmp:bribery}, however, and it is unclear how to interpret graph-based empirical models. With no consensus in the community about how to measure DAO decentralization today, there is a lack of principled guidance on ways to improve DAO decentralization and to combat threats to decentralization, such as vote-buying.

\paragraph{\indexlong (\indexshort).} The primary contribution of our work is to introduce \textit{\indexlongbold}~(\indexshort, pronounced ``vibe''), a decentralization measure tailored to DAO governance. \indexshort is based on a core principle: individual voters with closely aligned interests across elections are a centralizing force---who operate in concert as single, abstract voting entities---and should be modelled as such. 
Expressed differently, the key idea in \indexshort is to \textit{define centralization as the existence of large voting blocs}.

Formally, we express this principle in terms of the \textit{utility functions}~\cite{Utility:2024} of DAO participants, i.e., quantification of the gain or loss associated with voting outcomes. For a given set of elections, a voting bloc is a cluster of voters whose utility functions are similar over outcomes. \indexshort then, measures entropy over voting blocs based on utility functions---rather than over individual token holdings. The result is a broad concept that captures the centralization embodied in all of our examples above. Thus, \indexshort is a more accurate reflection of decentralization than prior metrics, as it accounts for subtle, centralizing forces which are not apparent from token ownership alone, but which affect a DAO's \emph{true} degree of decentralization. These centralizing effects are reflected in the \emph{alignment of  voters' utility functions}, which \indexshort's clustering step is sensitive to. Note that \indexshort is in fact a framework: It allows different notions of clustering and entropy to be plugged in, and thus can be tailored to particular applications.

Unlike prior decentralization metrics, we arrive at \indexshort from first principles. Toward this end, we introduce the \emph{\hypothesisname (\hypothesisnameshort)}, a reinforcement learning (RL) conceptual model of voting in DAOs. The key insight in the \hypothesisnameshort is that the purpose of voting is to collectively \emph{explore a policy space}, with the goal of maximizing some global reward function, (e.g., the DAO's treasury), as well as individuals' local reward functions (e.g., their monetary returns).

We show how this process can be naturally modeled as an RL problem, specifically a \textit{multi-agent} RL or \textit{MARL} problem. This modeling---the basis for the \hypothesisnameshort---underpins our reasoning about DAO decentralization. Specifically, existing work on MARL stresses the importance of \emph{agent diversity}, an aspect of MARL that motivates our formulation of \indexshort.

\paragraph{\indexshort in theory.} 
\indexshort is an abstract metric: It cannot be measured \textit{directly}, since users do not typically express (or even know) their utility functions explicitly. That is, utility functions are so-called \textit{latent variables}~\cite{Latent:2023}, %
and thus \indexshort is as well. However, \indexshort provides an important basis for \textit{reasoning about the impact of policy choices on decentralization}. 

We use \indexshort to prove a number of simple theorems about how various practices might increase or decrease DAO decentralization. 
In some cases, our theorems capture intuitive or folklore notions of decentralization expressed by the community (e.g., the three examples mentioned above). In other cases, they offer \emph{new} insights about decentralization. For example, we show that \emph{as the decentralization of a DAO rises, so does the risk of systemic bribery---and vice versa}. This result--- alongside our theorem showing that the act of bribery decreases decentralization---sheds new light on the connection between bribery in DAOs and decentralization.

Most importantly, several of our theorems lead to actionable recommendations regarding DAO governance. Further, our theorems serve as examples of a higher-level contribution: \indexshort's utility as a \emph{theoretical tool} to derive formal results about decentralization. 

\paragraph{\indexshort in practice.} While latent variables are not directly measurable, they can be estimated via measurable quantities called \emph{observable variables}~\cite{Latent:2023}. As such, we introduce \emph{observable \indexshort (\oindexshort)}, an observable variable that can be used to estimate \indexshort. \oindexshort estimates voters' utility functions in terms of observable, on-chain data---such as voting history---and clusters voters using this data. Like \indexshort, \oindexshort is a framework that can be instantiated with various clustering and entropy notions. By preserving \indexshort's structure, \oindexshort inherits the benefits of our framework, particularly with an accurate estimate of utility functions (we discuss this extensively in Section~\ref{sec:vbe-in-practice}). 

Since \oindexshort \emph{is} directly measurable, it opens the door for \indexshort-based measurement studies and decentralization experiments. We report on two example use cases of \oindexshort.

First, we perform a measurement study of the historical \oindexshort across a number of popular DAOs, which we make available in the form of a public dashboard. Our dashboard contains a variety of per-DAO \oindexshort metrics, and updates weekly to reflect the evolving \oindexshort landscape. 

Second, we show how \oindexshort can serve as a \emph{metric in governance experiments} to understand the effect of a particular mechanism on a DAO’s decentralization, by presenting an example experiment taking place in an ongoing collaboration with the Optimism Collective, who are using \oindexshort as a decentralization signal in an upcoming governance experiment in RetroPGF~\cite{OptimismPGF}.

To perform these studies, we developed a suite of tools to process governance data and compute \oindexshort in a variety of settings. As an additional contribution, we make these artifacts available to the community in the form of a comprehensive, open-source \emph{\oindexshort toolkit} for further decentralization studies. Our toolkit---which can be run as a standalone program or integrated as a library---supports various instantiations of \oindexshort, and is structured in a modular way so that users can easily tweak the various parameters to fit their use case.  

\paragraph{Contributions.} In brief, our contributions in this work are:
\begin{newitemize}
    \item \textit{\hypothesisname} (\hypothesisnameshort): We introduce \hypothesisnameshort, a simple multi-agent RL-based model of DAO voting (Section~\ref{sec:rl-decentralization}). This model is of independent interest, as it enables principled study of other questions about DAO governance.
    \item  \textit{\indexlong (\indexshort)}: We use the \hypothesisnameshort to derive \indexshort, a new framework for DAO decentralization that generalizes prior metrics and addresses a number of their shortcomings (Section~\ref{sec:vbe-definition}). %
    \item \textit{Proving results about DAO governance}: We leverage \indexshort to prove results about the impact of DAO practices and designs on decentralization (Section~\ref{sec:vbe-in-theory}), in some cases reaffirming some folklore notions and in others revealing new insights about DAO decentralization. More broadly, these theorems are examples of \indexshort's utility as a theoretical tool.
    \item \textit{Empirical studies:} We introduce \oindexshort, a directly measurable variant of \indexshort. We perform a \oindexshort-based measurement study of popular DAOs, presenting a public dashboard with our results (Section~\ref{subsec:ovbe-landscape}). We also release an open-source toolkit for \oindexshort to stimulate future empirical work in this space, and report on real-world governance experiments using the toolkit (Section~\ref{subsec:ovbe-gov-experiments}). 
    \item \textit{Practical guidance}: Based on our theoretical and experimental results, we present and summarize concrete points of practical guidance for DAO design and deployment (Section~\ref{sec:guidance}). 
\end{newitemize}

The main conceptual contribution of our work is a new \emph{conceptual model} for DAO decentralization that captures voting entities---instead of individual accounts---characterized by their \textit{alignment of incentives}. From this foundational idea, we derive our theoretical and experimental results. While these results are independently interesting (and indeed yield the actionable lessons we summarize in Section~\ref{sec:guidance}), they showcase a broader point, namely, \indexshort's flexibility and utility as a powerful tool to understand decentralization, which we hope motivates future work in assessing both the effectiveness of DAO governance. \newdiff{Furthermore, since decentralization is a critical security feature for DAOs, an accurate model and measure for decentralization are of central importance to DAO security---both in terms of monitoring propensity to risk, as well as the development of new techniques aimed at enhancing security by increasing decentralization.}

\section{Voting as a Learning Problem}
\label{sec:rl-decentralization}
The goal of our work is to introduce a decentralization framework for DAOs that addresses the gaps of existing metrics. To do so, we take a different approach to this problem than that of prior works: rather than starting from a candidate decentralization metric, we first take a step back, and start from a more foundational question instead: What is a \emph{first principle} for voting in DAOs, from which we can then derive a decentralization metric? That is, what is DAO decentralization itself?

Towards this, we introduce the \hypothesisname (\hypothesisnameshort) in this section. The \hypothesisnameshort is a multi-agent reinforcement learning (MARL)-based model that characterizes voting in DAOs, thus serving as a foundational principle from which we can study DAO governance. We use the \hypothesisnameshort to cast DAO decentralization as an analogue to \emph{agent diversity} in MARL; this insight will serve as the starting point for \indexshort. \newdiff{This first-principles derivation thus supports VBE as a logical choice among possible metrics.} We discuss the main conceptual ideas behind our model in this section---which are sufficient to understand the rest of our work---and refer readers
to the extended version of our paper for more details (including additional background on MARL).

\paragraph{Brief background on MARL.} Reinforcement learning (RL)~\cite{RL:2024} is a field of machine learning that studies the behavior of an agent that interacts with an environment, with the goal of maximizing some long-term expected reward. Multi-agent reinforcement learning (MARL)~\cite{MARL:2024} is a particular type of RL, in which \emph{multiple} agents are present in a shared environment.

MARL is a broad framework that can be used to model a multitude of problems. A MARL model has a few key components. There is an \emph{environment}, and a set of \emph{states} for this environment. There is a set of \emph{agents} present in this environment, each with a set of \emph{actions}, which defines the ways they can interact with the environment. The environment transitions from one state to the next as a result of the agents' collective actions; this transition is modeled via a \emph{state transition function}, which denotes the probability of transitioning from one state to another given the agents' actions. Whenever the environment makes a state transition, each agent receives an immediate reward, which is modeled via a \emph{reward function}.

The model proceeds in rounds. In each iteration, the agents observe the environment's current state and each receive a reward, from which each chooses its next action. This vector of actions is fed back to the environment, which transitions to a new state (as per the state transition function), resulting in new rewards for the agents (as per the reward functions). This process continues iteratively.

The actions taken by each agent are defined by a \emph{policy}, which denotes the probability that the agent takes a particular action in a particular state. The quality of a ``policy'' can be expressed in terms of a \emph{state-value function}, which represents the long-term expected return of executing the policy for many time-steps. The goal of each agent, then, is to find an \emph{optimal policy} that maximizes the value function. Finding this optimal policy requires a balance between \emph{exploration} and \emph{exploitation}: the agent needs to search for new potential strategies while also taking advantage of its present understanding of the environment.

The relationship between the reward functions of the agents is a key property of MARL models. All agents can \emph{cooperate}, which means that they have the same reward functions. Two agents can \emph{compete} against each other, which means that their reward functions are the negative of each other. More generally, agents may have related but different reward functions, i.e., there are elements of both cooperation and competition (which is typically referred to as a ``general-sum game'').

\mypara{Diversity in MARL.} Diversity of viewpoints is well recognized as a key aspect of collective decision making, as the ``wisdom of the crowds'' often leads to better decisions than singular viewpoints~\cite{surowiecki2005wisdom}. Similarly, in the context of MARL, \emph{diversity} between individual agents (particularly in the fully or partially cooperative setting) is an important property of a model that leads to more efficient learning in many contexts.

Agent diversity is a loose term that captures the degree of heterogeneity between agents, which can be analyzed from many angles (e.g., roles~\cite{wang2020roma}, actions~\cite{jaques2019social}, rewards~\cite{jiang2021emergence}, policies~\cite{mckee2020social}, etc.). Diversity has many benefits, including more efficient exploration~\cite{li2021celebrating}, development of more advanced cooperation policies~\cite{li2021celebrating}, discovery of niche skills~\cite{perez2021modelling}, etc. As such, a number of diversity-boosting techniques have been put forth in the MARL literature, leading to experimental validation that diversity improves the collective performance of a multi-agent 
model~\cite{mckee2020social,li2021celebrating,lee2019learning}. Indeed, agent diversity has led to improved performance in many concrete applications of MARL, such as investment portfolio management~\cite{lee2020maps}, multi-robot systems~\cite{bettini2023heterogeneous}, 
autonomous driving~\cite{pmlr-v155-zhou21a}, etc.

To study the benefits of agent diversity for model performance, it is important to have a way to \emph{measure} diversity to begin with. As such, there are many proposed diversity measures in the MARL literature. While an in-depth systematization of diversity metrics is outside the scope of this work, below we provide a high-level overview of diversity metrics. 

\paragraph{A (short) taxonomy of MARL diversity metrics.} We performed a literature review of diversity metrics in MARL to understand their similarities and differences. We discuss our main takeaways here, and refer interested readers to, e.g.,~\cite{liu2022unified,liu2021towards} for more technical details on diversity metrics, which are outside the scope of our work.

Conceptually, existing MARL diversity metrics have two main ingredients. First, some property of the agents is selected as the basis for gauging diversity (e.g., actions, reward functions, value functions, policies, etc). This is followed by computing some function over the the distribution of this property across agents (e.g., a statistical distance measure, correlation coefficient, entropy measure, etc.). Examples of mathematical objects that are used as the basis for agent diversity include policy 
functions~\cite{lupu2021trajectory,hu2022policy,zhao2023maximum}, reward
functions~\cite{perez2021modelling,zhang2021coordination}, actions~\cite{mckee2022quantifying}, and state or action-value functions~\cite{hu2022policy, yang2020multi}. Then, examples of functions that are computed over these objects are entropy~\cite{zhao2023maximum}, Jensen-Shannon divergence~\cite{lupu2021trajectory}, total variation distance~\cite{mckee2022quantifying}, KL divergence~\cite{hu2022policy,zhao2023maximum}, and the Pearson correlation coefficient~\cite{zhang2021coordination}.

While diversity metrics vary a lot in their details---and there seems to be no consensus in the community for what is the best metric---our literature review revealed an important conceptual lesson: diversity metrics characterize differences in \emph{behavior} (e.g., actions or policies) and \emph{incentives} (e.g., reward or value functions) of the agents, instead of just individuality of the agents. Indeed, works in this space have emphasized the fact that ``differences'' is not equivalent to ``diversity'', and how only optimizing for the former can lead to, for example, learning circular behaviors~\cite{yang2021diverse}. This insight, in addition to the two-point structure of agent diversity metrics, will form the basis for \indexshort.

\paragraph{\hypothesisname (\hypothesisnameshort).} We now describe the \emph{\hypothesisname (\hypothesisnameshort)}, a simple MARL model for voting in DAOs. This model is based on the insight that the fundamental goal of voting is to collectively \emph{explore a policy space} in order to maximize a set of related objectives. While members of a DAO have individual objectives (e.g., maximizing their monetary holdings), these are related to each other by the fact that (by definition) DAO members have shared assets among themselves, e.g., the DAO's token. Thus, they share a global goal, such as maximizing the value of this token or the DAO's treasury. To pursue these goals, the members collectively perform iterative ``actions'' in some policy space, as determined by the proposals and the resulting votes. We can frame this process as a simple MARL problem, which is what we refer to as the \hypothesisname (\hypothesisnameshort). The \hypothesisnameshort serves as a foundational model for DAO voting.

The environment corresponds to the blockchain, and each state corresponds to a state of the blockchain at the time when an election is taking place. The agents correspond to the members of the DAO. At each iteration, each agent has three available actions corresponding to voting choices for the election in question. The collective action of the multi-agent model consists of the vector of votes of all players, and the state transition function then moves to the next blockchain state as per the outcome of the vote, based on the DAO's voting system. After each state transition, the reward for each agent corresponds to their monetary utility for that election.

The goal of each agent is to maximize their long-term expected rewards. These rewards, however, are underpinned by some common objective, which broadly speaking represents the well-being of the DAO. As such, \hypothesisnameshort represents a general-sum game, were players need some level of coordination in order to achieve individual benefit.

\paragraph{DAO decentralization as agent diversity.} Based on the model of DAO members as MARL agents in \hypothesisnameshort, we can \emph{frame decentralization in DAOs as equivalent to agent diversity in MARL}. Diversity reflects the \emph{true} heterogeneity of MARL agents, i.e., the fundamental differences between these that lead to meaningful improvements in exploration, contribute new perspectives, and lead to greater collective wisdom. These are the same set of principles that fundamentally represent decentralization in DAOs, thus providing a new lens through which to study decentralization. Based on this insight, we can leverage the literature on diversity metrics to inspire a construction for a DAO decentralization metric; while the specific diversity metrics in MARL are highly bespoke for RL and application-specific, we can base our construction on the core principles behind these. This leads us to \indexshort, which is the main contribution of our work.

\section{\indexlong (\indexshort)}
\label{sec:vbe-definition}
We introduce \textit{\indexlong}~(\indexshort) in this section, our new framework for decentralization that generalizes prior metrics and sidesteps their limitations. It does so by normalizing token holdings based on voters' utility functions.

\paragraph{\indexshort: core ideas.} The starting point for \indexshort is the main conceptual takeaway we gleaned from our study of MARL diversity metrics: an accurate measure of decentralization should be reflective of the different \emph{behaviors} and \emph{incentives} among agents (voters), instead of merely identifying that the agents are different entities. This directly leads to the key idea behind \indexshort: instead of modeling decentralization in terms of the distribution of tokens across individual voters (as prior metrics do), we frame it instead in terms of the distribution of tokens across \emph{groups of voters with aligned interests across elections}, which are functionally acting as a single entity. Similar to diversity in MARL, aggregating voters based on aligned interests allows us to capture interactions and relationships among players in the system, which determine the true degree of decentralization of a DAO. 

We formalize the notion of ``aligned interests'' in terms of the DAO members' \emph{utility functions}~\cite{Utility:2024} across elections. In our setting, this is the natural mathematical object that represents the voters' incentives across elections (and, indeed, corresponds to reward functions in \hypothesisnameshort, which is a common first-ingredient in MARL diversity metrics). \indexshort, then, computes some function over these utilities in order to partition the set of players, which ultimately serves as the basis upon which to gauge the distribution of tokens. Thus, conceptually, \indexshort generalizes existing decentralization metrics by bootstrapping the two-point structure of MARL diversity metrics as a \emph{pre-processing} step to determine the groups---or \emph{blocs}---of voters with aligned incentives, over which we can then compute the entropy (or other similar function) of tokens.

\paragraph{DAO abstraction.} Before presenting the formal definition of our \indexshort framework, we introduce the notation that our definition and theorems rely on.

Let $\players = \{\player_1, \ldots, \player_n\}$ be the set of token holders in a system, and $\tokens \colon \players \to \mathbb{R}^+$ a mapping specifying the number of tokens held by each $\player \in \players$. (We will often overload this notation, and input a \emph{set} of accounts to $\tokens$ instead, by which we mean the total tokens held across all accounts in the set). These token holders participate in a set of (binary) elections $E = \{e_1, e_2, \ldots, e_m\}$, where we denote by $\vote_\player \colon E \to \{\true, \false, \bot\}$ player $\player$'s vote in election $e$; $\bot$ indicates that $\player$ abstained from voting in $e$. We represent all of $\player$'s votes across $E$ by $V_{E, \player}$. We define $\util_\player \colon E \times \{\true, \false\} \to \mathbb{R}$ to be the monetary utility of an outcome of $\true$ or $\false$ in $e$ to player $\player$, where we make the simplifying assumption that $\util_\player(e, \true) = -\util_\player(e, \false)$. Player $\player$'s total utility across all elections $E$ is represented by a vector $U_{E, \player} \coloneqq (\util_\player(e_i, \true))_{i \in [m]} \in \mathbb{R}^m$; we denote by $U_{E, \players}$ all players' utilities, i.e., $U_{E, \players} \coloneqq (U_{E, \player})_{\player \in \players}$.

Token holders often have low stakes in the elections, resulting in lack of interest or abstaining from voting altogether. More formally, we say that player $P$ is \emph{$\epsilon$-apathetic} in election $e$ if and only if $|\util_\player(e, \true)| \leq \epsilon$. We denote this set of apathetic voters by $\apath$. If the system supports vote delegation (for example, as a means to combat voter apathy), players may delegate their tokens to others, who cast a single vote on behalf of all the tokens they now~hold.

\subsection{Framework for \indexshort}\label{subsec:ovbe-framework}
We present \indexshort in this section. \indexshort is an abstract \emph{framework} for decentralization, which is parameterized by: (1) a clustering function, and (2) an entropy measure, which are the two key ingredients that underpin our definition.

\paragraph{Clustering.} We let $C \colon U_{E, \players} \times U_{E, \players} \to \{0, 1\}$ be a clustering function\footnote{Formally, a clustering function takes as input a set and returns a partition of it. For clarity of presentation, we represent $C$ as the induced equivalence relation of the partition, as these are corresponding terms.} that outputs $1$ if the utilities of two players are ``aligned'' across all elections $E$, and 0 otherwise. Our definition of \indexshort is agnostic to a specific clustering function, and instead only assumes that $C$ specifies an equivalence relation $\sim_{C}$ on the set  $U_{E, \players}$. Note that $C$ additionally induces a partition on $\players$, whereby $\player_i$ and $\player_j$ are in an equivalence class if and only if $C(U_{E, \player_i}, U_{E, \player_j}) = 1$. That is, $C$ partitions $\players$ into classes of players with aligned utility functions across elections. We will often overload notation and directly refer to $C$ as a partition of $\players$. Following standard notation, we denote that two players are in the same class by $P_i \sim_{C} P_j$, the set of all classes by $\players/\sim_{C}$, and the class $\player$ belongs to by $[\player]$.

Note that our model (and, thus, our definition of clustering) assumes that players vote on binary elections. This assumption is just for clarity of presentation. We could also consider clustering functions that partition players based on their utility functions over elections with more outcomes. (In fact, we explicitly do this in one of our experiments in Section~\ref{sec:vbe-in-practice}.)

\paragraph{Entropy.} The clustering step serves as a way to normalize token holdings based on the players' aligned incentives across a set of elections. After this, we can then, as is standard, compute some measure of the distribution of tokens, but this time across \emph{the resulting equivalence classes}.

More formally, we let $F : \mathbb{P} \times \players^{\mathbb{R}^+} \to \mathbb{R}$ be a function from the distribution of tokens across \emph{sets} of accounts to real numbers. In this notation, $\mathbb{P}$ denotes the set of all partitions of $\players$. So, the function $F$ takes as input a partition of $\players$ and the function $\tokens \colon \players \to \mathbb{R}^+$ (which maps accounts to the tokens they own), and returns some real number. The purpose of $F$ is to measure, in some sense, how ``evenly distributed'' tokens are across voting blocs. For instance, $F$ can be any of the many variants of Rényi entropy\footnote{Entropy is formally defined over a random variable, but we are overloading notation to think of the mapping between sets of accounts and their respective cumulative token balances as the probability mass function of a random variable.} (e.g., min-entropy, Shannon entropy, or max entropy), or any of the token distribution functions found in prior decentralization metrics. (In particular, note that we can represent any prior decentralization metric as an instantiation of \indexshort, by using the vacuous clustering function wherein each bloc has one and only one account, over which we compute the distribution function used by the metric.) We stress, however, that in principle $F$ can be any function, and our definition makes no assumptions about its structure.

\medskip
We are now ready to define \indexshort. Intuitively, our definition says that a DAO is more decentralized if the distribution of tokens across the blocs specified by $\sim_{C}$ has high entropy according to $F$. More concretely:
\begin{definition}[\indexlong]
\label{def:govdec3}
For a set of elections $E$, a set of players $\players$ with corresponding utilities $U_{E, \players}$, a mapping specifying token ownership across accounts $\tokens$, a clustering metric $C$, and an entropy measure $F$, we define \textit{\indexlong~(\indexshort)} to be: 
\begin{equation*}
\indexshort_{C, F}(E,\players, U_{E, \players}, \tokens) \coloneqq F(\players/\sim_{C}, \tokens).\end{equation*}
\end{definition}

\medskip
Note that, since \indexshort is a framework, it is (more accurately) a \emph{family} of decentralization metrics, all under the umbrella of \indexshort's two-step structure. That is, concrete decentralization metrics are a result of \emph{instantiations} of \indexshort with particular clustering functions and entropy measures. In particular, \indexshort can be instantiated with, for example, state-of-the-art clustering functions (e.g., K-means, hierarchical clustering, DBscan, etc.), and commonly-used entropy measures (e.g., Shannon or min-entropy). The specific instantiation of \indexshort to use is problem-specific, analogous to how different clustering functions are better suited for different types of inputs and applications. \newdiff{(We discuss this in more detail in Section~\ref{sec:vbe-in-practice}).}

\paragraph{An overview of \indexshort uses.} As we emphasized in Section~\ref{sec:introduction}, (any instantiation of) \indexshort is a \emph{latent metric}, i.e., it cannot be measured directly. This is due to the fact that utility functions are latent variables~\cite{Latent:2023}---conceptually important, but not always directly measurable. Thus, a priori, \indexshort's primary use case is as a \emph{theoretical tool} to formally derive results about the impact of policy choices or practices in decentralization, for which conceptual reasoning about utility functions is sufficient. We discuss this use case in more detail in Section~\ref{sec:vbe-in-theory}.

The second application of \indexshort is as an \emph{empirical tool} for \indexshort-based measurement studies and governance experiments. While we cannot directly measure the ``true'' \indexshort of a DAO, we can \emph{estimate} it: utility functions (and, thus, \indexshort) can be approximated via measurable \emph{observable variables}~\cite{Latent:2023}. This variant of \indexshort, which we cover in depth in Section~\ref{sec:vbe-in-practice}, inherits all the conceptual benefits of the \indexshort framework while additionally being tractable.

\newdiff{
\subsection{\indexshort and DAO Security}\label{subsec:vbe-security}
While \indexshort is a decentralization metric, it has important connections to DAO \emph{security} as well, which we discuss in this section. Uses of \indexshort (which are the focus of subsequent sections) therefore have direct applications to DAO security.

\paragraph{Decentralization and DAO security.}
As discussed in Section~\ref{sec:introduction}, decentralization is a critical \emph{security} feature for DAOs: inadequate decentralization can lead to unfair extraction of value from DAOs by subgroups, and even outright theft in the form of \emph{governance attacks}. Notable examples of governance attacks include treasury raids~\cite{GoldenBoys:2024}, rug pulls~\cite{rugpulls}, systemic bribery~\cite{bribery}, flash loan attacks~\cite{dotan2023vulnerable}, and more. DAOs that are more centralized are at higher risk for such attacks~\cite{feichtinger2024attacks}. Adversaries often attack DAOs by acquiring sufficient voting power to perform a malicious action. Thus, more centralized DAOs---particularly those with high voter apathy---are especially vulnerable because of (1) a low threshold to trigger an attack, (2) easy formation of adversarial coalitions, and (3) difficulty in reverting the impact of attacks. A healthy degree of decentralization is thus critical for mitigating security risks.

\paragraph{VBE and DAO security enhancements.} The importance of decentralization for DAO security highlights the fact that an accurate decentralization \emph{measure} is a critical security tool. \indexshort can thus help bolster DAO security in two ways:

\begin{newitemize}
    \item \emph{Monitoring}: Continuous tracking of a DAO's decentralization helps monitor its security and propensity to risk. By offering a more accurate decentralization measure, \indexshort more accurately reflects a DAO's risk profile than previous measures. Importantly,  many governance attacks are underpinned by dangerous \emph{group alignment}, which \indexshort, unlike prior metrics, is sensitive to.
    \item \emph{Design}: \indexshort offers a means of assessing the impact of governance design choices aimed at increasing decentralization to enhance security. \indexshort can be used to formally prove the directional impact of policy choices on decentralization (Section~\ref{sec:vbe-in-theory}); and to experimentally test decentralization interventions aimed at increasing security (Section~\ref{sec:vbe-in-practice}).
\end{newitemize}

\medskip
Now that we have introduced the \indexshort framework, and its relevance to DAO security, the rest of our work presents and exemplifies its use cases.
}

\section{\indexshort in Theory}
\label{sec:vbe-in-theory}
We present a variety of theoretical results implied by \indexshort in this section. In some cases, our results are simple, and show how \indexshort confirms intuitive notions about decentralization expressed by the community. While simple, these results show how \indexshort is able to capture many of the subtle issues that impact decentralization in a DAO. In other cases, \indexshort provides novel insights on decentralization, from which we can derive practical guidance (see Section~\ref{sec:guidance}). More broadly, this section showcases \indexshort's utility as a theoretical instrument, and how it can serve as  formal groundwork for future results about decentralization. \newdiff{In particular, VBE can be used as a tool to assess the theoretical impact of governance design choices that are aimed at enhancing a DAO's security by increasing decentralization (Section~\ref{subsec:vbe-security}).}

Before presenting the implications of \indexshort, we first remark that for most ``reasonable'' instantiations of $F$ (such as Shannon or min-entropy), the ``trivial'' clustering function which assigns each player to its own cluster \emph{gives an upper bound on \indexshort}. Concretely, this fact holds for any $F$ that increases whenever the tokens held by any players's voting bloc increase: if this is the case, for all players in the system, the number of tokens held by their bloc according to any clustering metric is necessarily greater than or equal to the number of tokens held by a ``bloc'' that only contains themselves. In particular, recall from Section~\ref{sec:vbe-definition} that prior decentralization metrics can be cast in the \indexshort framework precisely as instantiations that use this trivial clustering function. As such, \indexshort is, at worst, equivalent to the entropy-based notions introduced by prior work, which focus on account balances alone. (We also show this empirically in Section~\ref{sec:vbe-in-practice}.)

\subsection{Implications of \indexshort}
\label{subsec:implications-of-vbe}
We now explain the theoretical insights implied by \indexshort.

\paragraph{\indexshort Master Theorem.} Our theoretical results all aim to show the impact of policy choices or system changes on DAO decentralization, in terms of \indexshort. They all have a similar structure: (1) we consider two systems such that the only difference between them is some ``transformation'' of interest, e.g., a portion of the voters become apathetic, votes are instead private, etc; (2) we reason about the impact of this transformation on the voting blocs of both systems; (3) based on this, we compute and compare the \indexshort of both systems. We now define a ``master'' theorem for \indexshort which captures this structure, and thus serves as a proof template that can be instantiated with concrete transformations of interest to prove different results. Our theorem is stated in terms of an arbitrary clustering function $C$, and \emph{min-entropy} as the entropy measure. Min-entropy captures the amount of ``information'' in the largest voting bloc by token holdings, i.e., for a set of sets of addresses $A$ with a total of $T$ tokens held across all individual accounts,
\small
\[F_{\min}(A, \tokens) \coloneqq \log_2\Bigg(\frac{\max\limits_{A' \in A} \tokens(A'
)}{T} \Bigg). \]
\normalsize 

We use min-entropy to state the VBE master theorem for clarity of presentation, but it can be refined to account for other entropy measures.

\begin{thm}[\indexlong Master Theorem]
\label{thm:vbe-master}
     We define $T$ to be a function that represents a \emph{system transformation}, i.e., a change in the players, elections, utilities of the players, and/or the distribution of tokens, which we denote by $(\players', E', U'_{E', \players}, \tokens') \coloneqq T(\players, E, U_{E, \players}, \tokens)$. The total number of tokens in the system stays constant, however. Let $B$ and $B'$ be the (not necessarily unique) largest clusters by token holdings according to clustering function $C$ for $(E, U_{E, \players}, \tokens)$ and $(E', U_{E', \players}, \tokens')$, respectively. Then, it follows that
     \small
    \begin{align*}
        \tokens'(B') &\geq \tokens(B) \iff \\ \indexshort_{C, \min}(E, \players, U_{E, \players}, \tokens) &\geq \indexshort_{C, \min}(E', \players', U'_{E', \players}, \tokens').
    \end{align*}
    \normalsize
\end{thm}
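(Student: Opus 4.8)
The plan is to unwind the definitions and observe that the entire statement reduces to a monotonicity property of $\log_2$ together with the fact that the total token count is invariant under $T$. First I would recall that, by Definition~\ref{def:govdec3}, $\indexshort_{C,\min}(E,\players,U_{E,\players},\tokens) = F_{\min}(\players/\sim_C, \tokens)$, and similarly for the transformed system. Writing $T$ for the (fixed) total number of tokens, which by hypothesis satisfies $\tokens(\players) = \tokens'(\players') = T$, the definition of $F_{\min}$ gives
\[
\indexshort_{C,\min}(E,\players,U_{E,\players},\tokens) = \log_2\!\Big(\tfrac{\tokens(B)}{T}\Big), \qquad
\indexshort_{C,\min}(E',\players',U'_{E',\players},\tokens') = \log_2\!\Big(\tfrac{\tokens'(B')}{T}\Big),
\]
where $B$ and $B'$ are the largest clusters by token holdings in the two systems, exactly as defined in the statement.

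Next I would invoke the strict monotonicity of $x \mapsto \log_2(x)$ on $\mathbb{R}^+$ (and hence of $x \mapsto \log_2(x/T)$ for fixed $T > 0$). This immediately yields the chain of equivalences
\[
\tokens'(B') \geq \tokens(B)
\iff \tfrac{\tokens'(B')}{T} \geq \tfrac{\tokens(B)}{T}
\iff \log_2\!\Big(\tfrac{\tokens'(B')}{T}\Big) \geq \log_2\!\Big(\tfrac{\tokens(B)}{T}\Big),
\]
and the right-hand side is, after substituting the two displayed expressions above, precisely the inequality $\indexshort_{C,\min}(E,\players,U_{E,\players},\tokens) \geq \indexshort_{C,\min}(E',\players',U'_{E',\players},\tokens')$ — note the reversal relative to $B,B'$ is already baked into how the claim is phrased, since a \emph{larger} top bloc means \emph{lower} (more negative) min-entropy and hence lower VBE. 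I would then be careful to state the one genuine hypothesis being used, namely that $T > 0$ and the total is preserved, so that dividing by $T$ and taking logs are both legitimate; an edge case worth a sentence is when $B$ or $B'$ is not unique, but since $F_{\min}$ depends only on the \emph{value} $\max_{A' \in A}\tokens(A')$ and not on which maximizer is chosen, the quantities $\tokens(B)$ and $\tokens'(B')$ are well-defined regardless, so no ambiguity arises.

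The ``hard part'' here is essentially nonexistent at the level of this theorem — it is a bookkeeping lemma whose only content is that min-entropy is a monotone function of the largest bloc's token mass. The real work is deferred to the \emph{instantiations}: for each concrete transformation $T$ (voter apathy, proposal bundling, delegation, private voting, bribery, etc.), one must separately establish the left-hand inequality $\tokens'(B') \geq \tokens(B)$ (or its negation) by reasoning about how $T$ reshapes the clusters of $\sim_C$ — this is step (2) in the ``similar structure'' described before the theorem, and it is there, not here, that any difficulty lies. Thus I would present this proof as short and essentially mechanical, emphasizing that its value is as a \emph{template}: it isolates the single inequality about top-bloc token mass that each downstream result needs to verify.
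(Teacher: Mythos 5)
Your overall approach is exactly the paper's: unwind the definition of $F_{\min}$, use the hypothesis that the total token supply $T$ is preserved so that both sides can be normalized by the same denominator, and invoke monotonicity of the logarithm; your remarks about non-uniqueness of the maximizer and about the real work living in the instantiations also match the paper's framing. However, there is a sign inconsistency that makes your displayed chain prove the reverse of the stated equivalence. Taking the printed definition $F_{\min}(A,\tokens)=\log_2\big(\max_{A'\in A}\tokens(A')/T\big)$ at face value (no leading minus), your chain ends with $\log_2(\tokens'(B')/T)\geq\log_2(\tokens(B)/T)$, which after substitution reads $\indexshort_{C,\min}(E',\ldots)\geq\indexshort_{C,\min}(E,\ldots)$ --- the opposite of the theorem's right-hand side. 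Your parenthetical that the reversal is ``already baked in, since a larger top bloc means lower (more negative) min-entropy'' contradicts the substitution you just performed: with $\log_2$ and no minus sign, a larger top bloc gives a \emph{larger} (less negative) value, hence \emph{higher} \indexshort. The paper's own proof resolves this by applying $-\log_2(\cdot)$ --- i.e., the standard min-entropy, which the printed definition of $F_{\min}$ evidently intends but drops the minus sign from --- so that negation flips $\geq$ to $\leq$ and the theorem's direction comes out correctly. The fix to your write-up is one line: take $F_{\min}(A,\tokens)=-\log_2\big(\max_{A'\in A}\tokens(A')/T\big)$, after which $\tokens'(B')\geq\tokens(B)\iff -\log_2(\tokens'(B')/T)\leq -\log_2(\tokens(B)/T)\iff \indexshort_{C,\min}(E',\ldots)\leq\indexshort_{C,\min}(E,\ldots)$, as claimed. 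Everything else in your argument is sound and matches the paper's proof.
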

\begin{proof}
    This follows directly from the definition of $\indexshort_{C_{\epsilon}, \min}$:
    \small
        \begin{align*}
        \tokens'(B') \geq&\ \tokens(B) \\
        \iff  \frac{\tokens'(B')}{\sum\limits_{\player \in \players'} \tokens'(\player)} \geq&\ \frac{\tokens(B)}{\sum\limits_{\player \in \players} \tokens(\player)} \\
        \iff -\log_2 \Big(\frac{\tokens'(B')}{\sum\limits_{\player \in \players'} \tokens'(\player)}\Big) \leq &\ -\log_2 \Big(\frac{\tokens(B)}{\sum\limits_{\player \in \players} \tokens(\player)}\Big) \\
          \iff  \indexshort_{C, \min}&(E', \players', U'_{E', \players'}, \tokens') \\ \leq \indexshort_{C, \min}& (E, \players, U_{E, \players}, \tokens)
    \end{align*}
\end{proof}
\normalsize
Note that, if $B'$ represents a (new) majority by token holdings, then \indexshort strictly decreases; equality follows when $\tokens'(B') = \tokens(B)$.

This master theorem thus serves as a template that individual theorems can bootstrap from: simply specify a transformation $T$, explain how this modifies the largest voting bloc (if at all) according to the clustering function, and invoke Theorem~\ref{thm:vbe-master}. Armed with this formula, we now move on to concrete theoretical insights implied by \indexshort. Due to space constraints, we present the full theorem statement and proof for only the first of our results, to showcase the general structure of these. For the rest, we present the conceptual ideas behind the results here, and refer readers to Appendix~\ref{sec:A-additional-theorems} and the extended version of the paper for the full details,.

\newdiff{As explained in Section~\ref{subsec:ovbe-framework}, the \indexshort framework is compatible with any clustering function that defines a partition on $\players$, even if the function is vacuous or contrived. Therefore, in order to derive meaningful conclusions, our results will need to assume basic properties about the clustering algorithm in use, as generic results for arbitrary partitions would lead to trivial conclusions. For the rest of this section, we assume that we are dealing with any clustering function $C$ satisfying two simple properties. First, there exists some small constant $\epsilon$ such that, if $\player_i$ and $\player_j$ are such that $|\util_{\player_i}(e, \true)| \leq \epsilon$ and $|\util_{\player_j}(e, \true)| \leq \epsilon$ for all elections $e \in E$, then $C(U_{E, \player_i}, U_{E, \player_j}) = 1$. Second, if $\player_i$ and $\player_j$ are such that their utilities for every election have the same sign, then $C(U_{E, \player_i}, U_{E, \player_j}) = 1$. We note that our results can be modified in straightforward ways to accommodate for other ``natural'' assumptions of clustering functions, and thus our conceptual takeaways are general; we restrict ourselves to the aforementioned ones for clarity of presentation. More importantly, these assumptions are satisfied by all standard clustering functions if we focus our analysis on \emph{ordinal} utility functions. Looking ahead, this will be the case in Section~\ref{sec:vbe-in-practice}, where we empirically estimate \indexshort using voting history (there, we also discuss ordinal utilities, and their extensive use in economics, in more detail).

\paragraph{Result \#1: Owning multiple accounts.} As explained in Section~\ref{sec:introduction}, previous notions of entropy fail to capture the centralization that is present (but hidden) when a whale distributes tokens across multiple accounts / addresses. In such cases, it may appear that tokens are well diversified across accounts, while a large fraction are in fact under the control of one entity. Unlike prior notions, \indexshort captures this nuance: all these accounts would indeed be grouped together in a single voting bloc (we make the simplifying assumption that an individual's utility function is the same across all her accounts). We formalize this below.

\begin{thm}[Sybil Attacks and \indexshort]
\label{thm:owning-multiple-accounts}
     Let $(\players', E, U'_{E, \players'}, \tokens') = T_{\texttt{mult}}(\players, E, U_{E, \players}, \tokens)$ be the transformation where some player $\player \in \players$ divides her tokens across a new set of accounts $\hat{\players}$, i.e., $\players' = \players \cup \hat{\players}$, $\tokens'(\hat{\players}) = \tokens(\player)$, and $\forall \hat{\player} \in \hat{\players}$, $U'_{E, \hat{\player}} = U_{E, \player}$. The rest of the system remains unchanged. Then, it follows that
     \small
    \begin{equation*}
        \indexshort_{C, \min}(E, \players, U_{E, \players}, \tokens) = \indexshort_{C, \min}(E, \players', U'_{E, \players'}, \tokens').
    \end{equation*}
\end{thm}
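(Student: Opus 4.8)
The plan is to invoke the \indexshort Master Theorem (Theorem~\ref{thm:vbe-master}) with the transformation $T_{\texttt{mult}}$, so the entire argument reduces to showing that this transformation does not change the token count of the largest voting bloc, i.e., $\tokens'(B') = \tokens(B)$. Given that equality, the ``$\iff$'' of the master theorem immediately yields equality of the two \indexshort values. So the real content is a bookkeeping claim about how the clustering partition changes when player $\player$ is split into the Sybil set $\hat\players$.

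First I would fix a largest bloc $B$ of the original system. The key observation is that the clustering function $C$ depends only on utility vectors, and by hypothesis every new account $\hat\player \in \hat\players$ has $U'_{E,\hat\player} = U_{E,\player}$. Since $\sim_C$ is an equivalence relation determined by utilities, all of $\hat\players$ lands in the same class as $\player$: that is, $[\player]' = [\player] \cup \hat\players$ (with $[\player]$ denoting $\player$'s class in the original system, viewed as a subset of $\players \subseteq \players'$), while every other class is unchanged, since no other player's utility vector was touched and $\players' \setminus ([\player] \cup \hat\players) = \players \setminus [\player]$. Next I would compute token totals class by class. For the modified class, $\tokens'([\player]') = \tokens'([\player] \setminus \{\player\}) + \tokens'(\{\player\}) + \tokens'(\hat\players)$; since $\tokens'$ agrees with $\tokens$ on all accounts of $\players$ other than $\player$, since $\tokens'(\player)$ is whatever $\player$ retained, and since $\tokens'(\hat\players) + \tokens'(\player) = \tokens(\player)$ (she merely redistributes her stake; the total in her ``extended bloc'' is conserved), we get $\tokens'([\player]') = \tokens([\player])$. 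For every other class $X \in \players/\sim_C$ with $X \neq [\player]$, we have $\tokens'(X) = \tokens(X)$ verbatim. Hence the multiset of per-bloc token totals is identical before and after, so in particular the maximum over blocs is unchanged: $\tokens'(B') = \tokens(B)$.

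Finally I would feed $\tokens'(B') = \tokens(B)$ (which in particular gives both inequalities $\tokens'(B') \geq \tokens(B)$ and $\tokens(B) \geq \tokens'(B')$) into Theorem~\ref{thm:vbe-master}, concluding $\indexshort_{C,\min}(E,\players,U_{E,\players},\tokens) = \indexshort_{C,\min}(E,\players',U'_{E,\players'},\tokens')$, as claimed. I would note explicitly that the argument uses only that $C$ is a utility-determined equivalence relation and that total tokens are conserved, so it holds for the full \indexshort framework and not just the min-entropy instantiation (the master theorem's min-entropy form is used only for the final one-line invocation, and the same conservation-of-bloc-totals fact would work against any monotone $F$).

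The main obstacle is not any hard inequality but rather the careful handling of the set algebra: one must be precise that $\player$ herself remains in $\players'$ (the theorem statement has $\players' = \players \cup \hat\players$, not $\players' = (\players \setminus \{\player\}) \cup \hat\players$), that $\tokens'(\player)$ may be zero or positive, and that the ``conservation'' identity $\tokens'(\hat\players) = \tokens(\player)$ together with the unchanged balances on $\players \setminus \{\player\}$ forces the extended bloc's total to match exactly. A secondary subtlety worth a sentence is the tacit modeling assumption, flagged in the theorem's surrounding prose, that an individual's utility function is identical across all her accounts --- this is what licenses placing all of $\hat\players$ in $\player$'s class --- and I would state it as an explicit hypothesis rather than leave it implicit.
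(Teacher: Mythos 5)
Your proposal is correct and follows essentially the same route as the paper's own proof: all Sybil accounts inherit $\player$'s utility vector and hence land in $[\player]$'s bloc, the bloc's token total is conserved, no other bloc changes, and the Master Theorem's equality case closes the argument. Your extra care with the set algebra (that $\player$ remains in $\players'$, that $\tokens'(\player)+\tokens'(\hat\players)=\tokens(\player)$) and the remark that the conservation argument works for any monotone $F$ are refinements of, not departures from, the paper's argument.
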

\normalsize

\begin{proof}
Let $B$ be the largest voting bloc by token holdings before $T_{\texttt{mult}}$, which may or may not include $\player$. By assumption, all $\hat{\player} \in \hat{\players}$ are such that $U'_{E, \hat{\player}} = U_{E, \player}$. Thus, all new accounts will be in the same voting bloc $B'$ after $T_{\texttt{mult}}$, namely, $B' = [\player]$.

It follows then that, even though $\player$'s tokens are distributed between all individual accounts in $\hat{\player}$, they are in fact still under the control of the same block, i.e., $B'$. As such, $\tokens'(B') = \tokens(B')$. So, since no blocs acquire any new tokens, $B$ is still the largest voting bloc by token holdings after $T_{\texttt{mult}}$. Then, from Theorem~\ref{thm:vbe-master} it follows that
\small
\[\indexshort_{C_{\epsilon}, \min}(E, \players, U_{E, \players}, \tokens) = \indexshort_{C_{\epsilon}, \min}(E, \players, U'_{E, \players}, \tokens)\]
\normalsize
as desired.
\end{proof}
This result shows that, according to \indexshort, the ``true'' decentralization of the system does not change when a whale splits her tokens into multiple accounts, as they are all still under the control of the same voting entity. Conversely, metrics that focus on account balances alone would mistakenly conclude that the decentralization of the system strictly increased, since a set of tokens is diversified across more accounts.

\paragraph{Result \#2: Apathy.}  A system where voters are apathetic, i.e., not interested in the direction of the community, is not aligned with the goals of a DAO: distribution of tokens is irrelevant if individuals abstain from voting, as elections are narrowed squarely to the set of more invested stakeholders. Our definition captures this fact. Intuitively, \emph{apathetic voters all have similar utility functions}, which reflects their lack of stake in the elections. \indexshort groups all of these players within the same voting bloc, i.e., the cluster of voters with low utilities.

If the disinterested players are small stakeholders to begin with, apathy has a centralizing effect, as they now belong to a larger bloc of aligned voters. Indeed, in practice, it is common for the set of apathetic voters to represent a majority of token holdings~\cite{daoapathy,fritsch2022analyzing}. We refer to the bloc of apathetic voters in a DAO, i.e., non-voting token holders, as the \textit{inactivity whale}. This term reflects the collective and potentially systemically important inactive behavior of this group.

\paragraph{Result \#3: Delegation.} Intuition would suggest that delegation leads to a more centralized system: tokens that are originally held by a large set of players are instead under the control of the (smaller) set of delegates. However, \indexshort shows how this situation is more nuanced, as delegation actually tends to make a DAO \emph{more} decentralized: before delegation, the tokens are all held by a \emph{single} voting bloc, namely, the inactivity whale. Delegation then diversifies the tokens held by this ``whale'', and distributes them amongst a set of voting blocs (the delegates). Assuming that the size of the inactivity whale is larger than each delegate's total tokens---which tends to be true in practice~\cite{daoapathy,fritsch2022analyzing}---the system is now more decentralized. That is, as long as the delegates are not ``too big,'' delegation has a decentralizing effect. Conversely, if some delegate is a whale, or gets delegated an overwhelming majority of tokens, then the system may become more centralized. Thus, delegation is most useful in cases \iffull{where apathy is high}\else{of high apathy}\fi.

\paragraph{Result \#4: Herding.} A core goal of DAOs---and any democratic system more broadly---is for token holders to vote according to their true preferences. In practice, however, many DAOs exhibit \emph{herding} behavior: when votes are publicly observable, social dynamics lead to the formation of ``coalitions'' of voters. For example, token holders have reported feeling influenced to vote a certain way, often in alignment with influential community members, in order to thwart the reputational risks associated with opposing popular viewpoints~\cite{sharma2024unpacking}. Similarly, it has been observed and measured that token holders often vote in alignment with their peers~\cite{messias2023understanding}, who now operate as a single, large entity. In both cases, the utility derived from the social impact of a player's vote skews the utility of her desired outcome in a vacuum. Herding leads to more centralization, as votes artificially converge on one outcome. Unlike token distribution across individual accounts, \indexshort does conclude that reputational risk lead to more centralization: it aligns the utilities of the players towards the socially preferred outcome, which results in a bloc of aligned voters.

An important conclusion of this theorem is that privacy instead \emph{increases} the decentralization of a system, as it serves as a ``mitigation'' to herding. That is, if votes are private, token holders can vote for their true preferences, instead of being influenced by, e.g., social optics or the votes of their peers.

\paragraph{Result \#5: Voting slates.} Grouping together various elections into a lesser number of (more general) elections---so-called ``voting slates''---is in opposition with decentralization: decision-making is more diluted, thus decreasing the relative impact of each voter in the underlying proposals. That is, voting slates ``factor out'' differences in the viewpoints of individuals, yielding more homogeneous utility functions. For example, two players may disagree in many of the individual proposals, but agree on a few of the more important ones, resulting in them casting the same overall vote. \indexshort reflects the fact that bundling proposals indeed decreased decentralization: by considering a narrower set of elections, which smoothens utility functions, different voting blocs are combined to form larger ones.

\paragraph{Result \#6: Bribery.} There is an intuitive relationship between decentralization and bribery, namely, that successful bribery poses a threat to decentralization: in such a case, the entity that acquires the votes of the other players now commands a higher proportion of the total tokens than before. While ownership of tokens has not changed, \indexshort is sensitive to this centralizing effect, as it groups all bribed voters into the briber's bloc, since all bribee's now have aligned utility functions in line with the bribers desired outcome.

A second, more nuanced observation is that successful bribery must be \emph{systemic}, i.e., must involve a large number of tokens, if (and only if) a system is highly decentralized. Intuitively, if a DAO is highly centralized, a briber can directly coordinate with a few large players to guarantee an election outcome; or, if the briber is a whale herself, she only needs to bribe a few of the smaller players to accumulate enough tokens to mount a successful attack. Instead, in a more decentralized system, players are smaller, so a briber needs to widen the scale of their attack if they want to win an election. In this case, successful bribery requires large-scale coordination among various smallholders. Thus, as a DAO becomes more decentralized, a higher number of tokens need to be corrupted to guarantee an election outcome, since all players are small to begin with. Conversely, in a more centralized DAO, large whales only need to coordinate among themselves, or corrupt relatively few additional tokens to guarantee their desired election outcome.

Though a longstanding concern, systemic bribery is generally not considered realistic in secret ballot elections, due to logistical and economic challenges and criminal penalties---not to mention difficulty in verifying voters' compliance with bribers' demands. DAOs, however, are vulnerable: (1) Vote-buying is increasingly legal for proxy voting and thus may well be for DAOs~\cite{cole2001proxies}; and (2) Vote-buying in DAOs can be programmatically executed and verified by smart contracts, as shown by active marketplaces such as Votium~\cite{lloyd2023emergent}. It is even technically possible in principle for vote-buying to occur confidentially~\cite{philquadvotingdaos}.

\paragraph{Result \#7: Quadratic voting.} Quadratic voting~\cite{lalley2018quadratic} is a voting mechanism that attempts to dilute the influence of whales on election outcomes. To do so, a vote from a player that owns $n$ tokens will only have an impact of $\sqrt{n}$ in the outcome election. At face value, quadratic voting seems to make a system more decentralized: the quadratic ``tax'' is directly proportional to the number of tokens a player owns, which thus shrinks the gap between smaller players and whales. However, quadratic voting is known to be vulnerable to Sybil attacks and other forms of malicious 
coordination~\cite{weyl2017robustness}, and thus may have a \emph{centralizing} effect: players that are in large voting blocs implicitly subvert the quadratic tax due to the fact that their true token count is split among all bloc members. 
}

\section{\indexshort in Practice}
\label{sec:vbe-in-practice}
As we have emphasized throughout this work, utility functions are latent variables, and consequently \indexshort is as well. Thus, even though \indexshort serves as a powerful conceptual tool, it is not directly measurable, which is an inherent limitation of any metric that depends on
utility functions (including important results and models from voting theory, e.g.,~\cite{weyl2017robustness,duffy2008beliefs}). However, latent variables, such as utility functions, can be measured \emph{indirectly}, by inferring them via \emph{observable variables}, which do lend themselves to direct measurement. We introduce ``observable'' \indexshort (\oindexshort) in this section, a variant of \indexshort based on estimating utility functions from observable data. Unlike its latent counterpart, \oindexshort \emph{is} directly measurable, and thus provides an estimate for the ``true'' \indexshort of a DAO. \oindexshort represents the practical, measurable half of the \indexshort framework, which opens the door for exciting applications, many of which we highlight in this section.

\subsection{Observable \indexshort (\oindexshort)}
\label{subsec:ovbe}
Recall that the \indexshort framework consists of two main steps: (1) clustering players based on utility functions, followed by (2) computing some entropy measure over the distribution of tokens over the resulting clusters. \oindexshort simply adds a preliminary step to this template: represent each player's utility function in terms of relevant observable data, and perform the clustering based on this representation instead. We now define \oindexshort formally, and then move on to study \oindexshort in practice, including an empirical validation of \oindexshort.

\paragraph{Defining \oindexshort.}
\oindexshort extends the \indexshort framework by introducing a \emph{metric space} $M \coloneqq (S, d)$ as a third parameter (alongside the clustering function and entropy measure) to the framework, where $S$ is some set with $|\players|$ elements and $d$ is a distance function on $S$. The space $S$ represents a class of observable data that are being used to estimate utility functions, and the function $d$ allows a clustering metric to partition $S$, since clustering functions require some similarity metric between their inputs in order to assign clusters. We thus get the following definition of \oindexshort:

\begin{definition}[Observable \indexlong]
\label{def:govdec3}
For a set of elections $E$, a set of players $\players$, a mapping specifying the distribution of token ownership $\tokens$, a clustering function $C$, an entropy measure $F$, and a metric space $M = (S, d)$ with $|\players|$ elements, we define \textit{Observable \indexlong~(\oindexshort)} to be: 
\begin{equation*}
\indexshort_{C, F, M}(E,\players, \tokens) \coloneqq F(\players/\sim_{C_M}, \tokens).\end{equation*}
\end{definition}

In the notation above, $\players/\sim_{C_M}$ represents partitioning $\players$ based on how $C$ partitions $S$ using $d$. That is, $\player_i$ and $\player_j$ are in the same equivalence class if and only if $S_i \sim_{C_d} S_j$.

\paragraph{\oindexshort in practice.}
More intuitively, \oindexshort simply consists of (1) using some relevant dataset (e.g., on-chain metrics) that characterizes the DAO members' utility functions across elections, (2) defining a notion of ``distance'' between elements of this dataset, (3) computing a clustering function over this dataset (which uses the distance metric), and (4) computing the entropy metric over the tokens held by each cluster.

The observable data used for clustering is a key parameter for \oindexshort. There is a rich body of work dedicated to inference of latent variables (and of utility functions specifically)~\cite{messick1985estimating}, and indeed \oindexshort is agnostic to the method used. The most natural starting point, which we use throughout the rest of this work, consists of using \emph{voting history} as the observable data that estimate a player's utility function for a set of elections. Assuming that players are rational actors, their vote (or lack thereof) in an election is equivalent to the \emph{direction} of their utility for that particular election. That is, for any player $\player$ and election $e$, it follows that if $\vote_\player(e) \neq \bot$, then $\util_\player(e, \vote_\player(e)) > \epsilon$; conversely, if $\vote_\player(e) = \bot$, then $|\util_\player(e, \true)| < \epsilon$. As such, $\player$'s vector of votes in the set of elections $E$, denoted by $V_{E, \player}$, represents their \emph{ordinal} utility function for $E$.

In this example, the dataset for clustering consists of a matrix $\votehistory_{\players, E} \coloneqq (V_{E, \player})_{\player \in \players}$, where each row represents the list of votes for that particular player across elections. We can then use any number of distance metrics and clustering functions for vectors, such as cosine similarity or Euclidean distance for the  former, and K-means or hierarchical clustering for the latter. \newdiff{As with other applications of clustering, the best variant to use depends on the nature of the input data. So, the choice of algorithm is a function of the observable data being used, and its characteristics. Standard considerations for choosing between algorithms apply to our setting (e.g., dataset size, dimensionality of points, sparsity of data, etc), and so ample literature~\cite{xu2015comprehensive} on clustering functions is directly applicable. For the example of $\votehistory_{\players, E}$ as input data, many of these clustering considerations translate to the nature of the DAO's elections, such as the number of proposals and voters, the degree of voter apathy, etc.

We note that, in some cases, a DAO's proposals and goals may be purposefully designed for broad acceptance, e.g., proposals for security patches or defeating attacks. \oindexshort is equipped to capture this nuance, by using a clustering function that is refined to take this into account. For example, we can use a standard clustering function and apply less weight on these ``popular'' proposals, or remove these from the clustering altogether. Similarly, the duration of alignment for clustering can be adjusted, simply by modifying the size of each \oindexshort window, i.e., the number of proposals that are included in each computation of \oindexshort.}

How closely \oindexshort estimates \indexshort will be primarily based on the observable data that are gathered, and how accurately it estimates utility functions. We refer readers to external sources such as~\cite{messick1985estimating} for more details on this relationship, which is outside the scope of our work. As mentioned above, for the rest of this work we will use voting histories, i.e., ordinal utilities, to compute $\oindexshort$. Ordinal utilities are widely used in the economics literature as a suitable estimate for utility functions~\cite{manski1988ordinal} (and, in fact, are equivalent to cardinal utilities in some cases~\cite{vnm-utility:2024}), and thus our instantiation of \oindexshort serves as a good proxy for \indexshort. \newdiff{Other potential on-chain data that could indicate alignment include, for example, assets owned, membership in other DAOs, sponsored proposals, etc., and thus could potentially also be used as observable data.} An interesting direction for future work is the use of off-chain data sources, such as social media interactions or low-cost straw polls, to refine estimates of utility functions.

\paragraph{\oindexshort open-source toolkit.} Now that we have introduced \oindexshort, we are ready to perform measurement studies and derive \indexshort-based applications. For this purpose, we prepared a comprehensive \oindexshort toolkit that can be used for easily computing \oindexshort as part of broader applications. Our toolkit, which we make available to the community as an open-source artifact~\cite{vbezenodo}, contains a variety of popular distance metrics, clustering functions, and entropy metrics, which can be assembled together to get a myriad of \oindexshort instantiations. Our toolkit can be used as both a standalone program or integrated as a library in applications, and simply requires users to input a dataset and select a clustering function, distance function, and entropy measure. 
\newdiff{In the documentation of our toolkit, we include guidance for selecting specific \oindexshort instantiations based on the nature of the input data, to complement existing documentation on choices of algorithms.}

Our toolkit is modular by design and can be easily extended to add support for more algorithms. Further, our toolkit provides a number of additional features besides just computing \oindexshort. For example, it has support for an intermediate report of the clusters and support for automatic detection of parameters. We refer readers to the documentation of our toolkit for more details on the features included.

While our toolkit is agnostic to the dataset used as input, we assume that voting history will often be used as observable data. As such, we include scripts for scraping the voting history of DAOs using open source APIs in a format that is directly compatible with the rest of our toolkit. (As described in Section~\ref{subsec:ovbe-landscape}, scraping voting history is surprisingly difficult).

In the rest of this section, we put this toolkit to the test by performing a variety of \oindexshort-based empirical studies. While these are independent contributions, they additional serve to showcase the flexibility and uses of our toolkit (and of \oindexshort more broadly).

\subsection{Empirical Validation of \oindexshort}
The starting point for our empirical work is an an experimental validation of \oindexshort. We first identify a \emph{natural experiment} that can be used as the basis for testing a decentralization metric, which we then apply to \oindexshort.

\paragraph{A natural decentralization experiment.} An ideal experiment to validate a decentralization metric consists of a single population of players who vote on two sets of proposals, such that the proposals in one set are purposefully crafted to yield more centralized results than the other. Then, one could independently compute a candidate decentralization metric on both sets of proposals, and see whether the outputs agree with the expected centralization.

It turns out that a \emph{natural} example of this experiment takes place in some DAOs today, in the form of \emph{two-round voting}. In these DAOs, proposals initially go through a preliminary round of voting---a so-called \emph{temperature check}---during which the community discusses the proposal and agrees on its parameters. This is followed by a second round of voting, which decides whether the proposal is ultimately accepted. The goal of the first voting round is precisely to form consensus around proposals, thus \emph{explicitly inducing a form of centralization for the second round}. As such, two-round voting is, by design, a clean, natural playground to test a decentralization metric: we can independently compute the metric on (1) all off-chain, temperature-check proposals, and (2) all on-chain, second round proposals, and confirm whether the metric reports higher centralization for the second type.

Of all DAOs with two-round voting, \emph{Uniswap} serves as a particularly compelling test case for this experiment. First, as a mature DAO with years of governance history, Uniswap has one of the largest DAO treasuries and readily-available proposal history \cite{DeepDAO:2023}. Second, Uniswap follows a straightfoward, unicameral legislative structure with UNI token holders serving as the sole decision-making body for all proposals \cite{uniswap-governance-docs}. This is unlike other major DAOs that follow a bicameral process, such as Arbitrum's Security Council~\cite{ArbitrumDAO:2023} and Optimism's Citizen House \cite{optimism-governance-process}.

Uniswap also explicitly defines the purpose of off-chain temperature checks to signal community sentiment prior to an on-chain vote, and the on-chain vote should incorporate feedback from prior voting rounds \cite{uniswap-governance-docs, uniswap-gov-forum}. Because of this governance mechanism design, we can reasonably infer that there would be more agreement in the second-round on-chain vote versus the off-chain vote.

\paragraph{\oindexshort validation.} We performed this experiment on Uniswap using \oindexshort as a decentralization metric, and confirm that \oindexshort \emph{reports a higher level of decentralization in the first round of voting}. We computed \oindexshort using K-means for clustering and both min-entropy and Shannon entropy, in windows of 10 proposals across Uniswap data on Snapshot (round 1) and Tally (round 2). We then averaged these values to obtain the average \oindexshort for each voting round. The first round of voting resulted in an \oindexshort of 0.7804 (min-entropy) and 1.3149 (Shannon entropy), while round two resulted in 0.6601 (min-entropy) and 1.1527 (Shannon entropy).

Importantly, note that, unlike \oindexshort, prior decentralization metrics would \emph{not} pass this empirical validation test, since the centralization induced by the first round of voting is orthogonal to account balances. In particular, if all token balances stay the same, these metrics would report that decentralization did not change.

\subsection{The \oindexshort Landscape}\label{subsec:ovbe-landscape}
\begin{figure*}[t]
    \centering
    \footnotesize
    \renewcommand{\arraystretch}{1.2}
    \begin{tabular}{|p{1.4cm}|p{1cm}|p{1.1cm}|r|p{1.8cm}|p{1.25cm}|p{1.2cm}|p{1.2cm}|p{1.2cm}|p{1.5cm}|}
    \hline
    \textbf{DAO} & \textbf{Avg. VBE} & \textbf{Std. dev. of VBE} & \textbf{Treasury} & \textbf{Category} & \textbf{Unique Voters} & \textbf{Total Proposals} & \textbf{Avg Voter Participation} & \textbf{Nakamoto Coefficient} & \textbf{Gini Index} \\
    \hline
    Optimism & 0.9929 & 0.2461 & \$2,800.00M & Infrastructure & 221,066 & 222 & 20.18\% & 20 & 0.997135224 \\
    Arbitrum & 0.9254 & 0.2374 & \$2,100.00M & Infrastructure & 283,300 & 544 & 19.18\% & 24 & 0.994582646 \\
    Nouns DAO & 0.8806 & 0.2493 & \$13.30M & DAO Tool & 1,059 & 563 & 2.97\% & 35 & 0.779029699 \\
    Gitcoin & 0.8536 & 0.2417 & \$42.20M & Funding & 11,903 & 141 & 9.87\% & 16 & 0.994281210 \\
    Uniswap & 0.7435 & 0.2161 & \$2,500.00M & DeFi & 68,240 & 328 & 9.56\% & 28 & 0.998941837 \\
    Decentraland & 0.5995 & 0.2544 & \$93.80M & Gaming, NFTs & 1,244 & 285 & 4.62\% & 6 & 0.972266008 \\
    Aave & 0.5394 & 0.2320 & \$138.90M & DeFi & 9,565 & 398 & 0.70\% & 9 & 0.995135576 \\
    \hline
    \end{tabular}
\captionof{table}{Summary of a subset of the DAOs included in our measurement study of \oindexshort calculated with min-entropy.}
\vspace{-0.3cm}
\label{fig:vbe-dashboard-summary}
\end{figure*}

After our empirical validation of \oindexshort, our next goal was to conduct a measurement study of DAOs, to understand how decentralization compares across the DAO ecosystem. We discuss the main steps of this process below. For a more extensive account of our methodology, we refer readers to the documentation in our open-source \oindexshort artifact ~\cite{vbezenodo}.

\paragraph{Step \#0: corpus of DAOs.} We first assembled a corpus of target DAOs to include in our analysis. We defined qualifying DAOs for the study as those with the top 20 largest treasury sizes, at least 25 recorded proposals, over 5,000 unique voters, and data availability in open source APIs for Snapshot and Tally. Narrowing down this list was critical for resource management, data quality, and applicability of the study. The final list included 34 DAOs.

\paragraph{Step \#1: data collection.} Once we had a list of target DAOs, the next step consisted of scraping the voting history and token holdings of all members of each DAO, to use as inputs to \oindexshort. Acquiring this data turned out to be a surprisingly difficult task. As prior work also points out, ``in practice it is not trivial to acquire all governance related information from raw blockchain data''~\cite{feichtinger2023hidden}. The greatest challenge we faced in this step was navigating the heterogeneity of voting data formats across DAOs, which required us to create data extraction scripts to assemble a full dataset of voting history and incorporate manually added metadata; we make these scripts available as part of our \oindexshort toolkit. \newdiff{Collecting the governance data of all 34 DAOs from our study took approximately 8 hours using a typical laptop.}

\paragraph{Step \#2: computing \oindexshort.} Once we had a dataset with the voting history of each DAO, we used our toolkit to compute \oindexshort across proposal windows. We used K-means with $k = 3$ as our clustering function, since align with the nature of voting choices which are normally ``Yes,'' ``No,'' and ``Abstain.'' In addition, we used Euclidean distance as the distance metric for clustering, a rolling window of 10 proposals, and min-entropy and Shannon entropy as our entropy measures. For comparison, we additionally computed other popular decentralization metrics for DAOs, such as the Gini Nakamoto coefficients. \newdiff{Performing all these computations took approximately 3.5 hours using a typical laptop. In our study, as little as 10 proposals sufficed for clustering. Exploring the theoretically-minimum number of proposals required to compute oVBE is an interesting direction for future work.}

\paragraph{Step \#3: results and analysis.} We display the results of our measurement study in a public dashboard, which can be found at~\cite{daovbedashboard}. We also show a summary of a few DAOs in Table~\ref{fig:vbe-dashboard-summary}. Our dashboard updates weekly to show the evolving \indexshort landscape. The focal point of our dashboard is the \oindexshort of each DAO. The dashboard first presents an overview of all DAOs, with high-level statistics such as proposal count, unique voters, voter participation, \oindexshort, Gini and Nakamoto coefficient. The DAO-specific view displays current, average, max, and min \oindexshort, as well as a graph of \oindexshort fluctuations. Finally, the proposal view specifies vote counts and voting power allocated to proposal choices, discussion URL, description, and outcomes by voter power and by vote count.

\paragraph{Dashboard use cases.} The main use case of our dashboard is to track and monitor the \oindexshort of a particular DAO and understand shifts in decentralization over time. This can help reason about the overall ``health'' of a DAO, and investigate potential forces in proposals or voter behavior that may change \oindexshort internally in an organization. For example, a stakeholder may want to understand the cause of a spike or drop in \oindexshort and whether it aligns with their expectations. These metrics and understanding of an organization can be used to improve a DAO's governance structure, build safeguards to protect against governance attacks, or serve as an alert function for risks or suspicious behavior. We note, however, that a limitation of \oindexshort is that, a priori, it does not provide direct insights on \emph{cross-DAO} comparisons.

As a second example, our dashboard can also help practitioners refine delegation measures and governance mechanisms. Clustered voting blocs can allow voters to easily identify delegates that align with their preferences, reducing this barrier of entry to delegation and thus increase voter participation in DAOs. Furthermore, many DAOs today implement measures to increase decentralization by redistributing voting power, such as delegating large amounts of voting power to a council of independent delegates \cite{optimism-anti-capture}. DAOs can dynamically change the amount of voting power they provide to these councils in order to stay within a target range of \oindexshort.

\subsection{\oindexshort in Governance Experiments}\label{subsec:ovbe-gov-experiments}
We discuss another application of \oindexshort in this section: as a decentralization metric in governance experiments. \oindexshort can be a useful metric to, for example, empirically test hypotheses about the impact of mechanism choices on the decentralization of a DAO. \newdiff{In particular, such governance experiments can be used to, e.g., empirically verify the impact of mechanisms designed to increase the security of a DAO by increasing decentralization (Section~\ref{subsec:vbe-security}).} While governance experiments can use any decentralization metric as their outcome, \oindexshort has the potential to be particularly informative due to its sensitivity to social dynamics like herding and apathy. We describe this application through the lens of a real-world case study, in the form of an ongoing collaboration with the Optimism Foundation.

\mypara{Example: collaboration with Optimism.} Optimism~\cite{Optimism:2023} is a collective that runs the popular RetroPGF~\cite{OptimismPGF} program, an initiative based on the idea of \emph{retroactively rewarding} projects that provide positive value to the community. Every funding round, a set of chosen voters determine the ``impact'' of each nominated project, by submitting ballots in the form of a distribution of some total dollar amount across the set of nominated projects. These ballots are then aggregated to determine the monetary reward for each project. To date, Optimism has had four rounds of funding, allocating over \$100 million to various projects~\cite{retropgf-funds}.

Maintaining a healthy level of decentralization is of central importance for RetroPGF and its security, as collusion between voters could lead to millions of dollars of misallocated funds. For this reason, Optimism is performing an experiment on a future round of RetroPGF, to understand whether changes to their governance structure would lead to higher decentralization. In particular, their goal is to test whether a \emph{random selection of voters} (sampled from the Optimism ecosystem) leads to a more decentralized funding round than \emph{web of trust}, which is their current mechanism to select voters. This hypothesis will be tested by comparing the casted ballots of two groups---a set of 100 voters selected at random, and a set of 100 voters selected via web of trust---using some decentralization metric to quantify the alignment of each group. This experiment would help inform whether random sampling could lead to a more heterogeneous group of voters.

Optimism has decided to use \oindexshort as their decentralization metric to use for this upcoming experiment. We worked with them to adapt \oindexshort to their particular governance structure, showcasing \oindexshort's ability to mold itself to a variety of governance structures. The main challenge in this process was selecting an instantiation of \oindexshort that matched their non-standard voting structure (i.e., where ballots are vectors of integers instead of a single, binary number). To address this, we adapted clustering functions used in other domains, where inputs are also vectors of higher dimensions, such as document clustering tasks~\cite{huang2008similarity}. This is based on the insight that, conceptually, we can think of a retroPGF ballot as analogous to a \emph{document in the bag-of-words model}, and thus we can leverage document similarity metrics for the clustering of voters. At a high-level, these metrics tend to follow a typical template, where documents are first pre-processed to normalize for common words, followed by a clustering algorithm (such as K-means or hierarchical clustering) on the input vectors, which under-the-hood relies on a distance metric for the vectors, e.g., Euclidean distance or cosine similarity. In our case, the pre-processing step would correspond to normalizing for highly popular projects.

The result of Optimism's experiment is pending. However, our process of working with them serves as a case-study of \oindexshort's utility as a metric for governance experiments, and its flexibility to meet a wide range of requirements and unique settings.

\section{Summary Guidance for DAOs}\label{sec:guidance}
Our \indexshort framework and the theoretical implications we show in~\Cref{sec:vbe-in-theory} suggest a number of forms of concrete guidance for DAOs seeking to enforce or improve meaningful decentralization. We summarize our guidance for practitioners in~\Cref{tab:recommendations}.

\begin{figure}[h!]
    \centering
    \renewcommand{\arraystretch}{0.6}
\resizebox{\columnwidth}{!}{
\LARGE
\begin{tabular}{|p{3cm}|p{7.5cm}|p{9cm}|p{3.45cm}|}
  \hline     \textit{Topic}  & \textit{General Guidance} & \textit{Reason}  & \textit{Relevant result}
       \\
        \hline &&&\\
     1. \textbf{Vote delegation} & Given a large inactivity whale, vote delegation tends to increase decentralization.  & Delegation \iffull (counterintuitively) \fi increases decentralization  by diversifying tokens away from a big inactivity whale.  & Thm.~\ref{thm:delegation}\\
        \hline &&&\\
     2. \textbf{Voting privacy} & Voting privacy increases decentralization. & Private voting eases herding, whose effects are centralizing.  & Thm.~\ref{thm:herding}\\ 
             \hline &&&\\
      3. \textbf{Voter bribery} & The scale of bribery increases with decentralization. & Low alignment of utility functions means systemic coordination is required to impose alignment.   & Thms.~\ref{thm:bribery1},~\ref{thm:bribery2}, and~\ref{thm:bribery3}\\
              \hline &&&\\
    4. \textbf{Identity verification} & Weak identity verification increases centralization in quadratic voting. &  A whale that can spread tokens across identities amplifies its voting power.  & Analysis in Section~\ref{subsec:implications-of-vbe}\\ 
                 \hline &&& \\
   5. \textbf{Voting slates / proposal bundling} &  Bundling choices into slates (like protocol upgrades that include many voting issues in one package) decreases decentralization. & Bundled choices artificially align otherwise heterogeneous utility functions and/or induce apathy by smoothing out utility functions.  & Thm.~\ref{thm:voting-slates}\\ 
             \hline &&&\\
   6. \textbf{Data collection} & Careful voting-statistic collection facilitates decentralization measurement. & Lack of systematic collection and publication of detailed voting statistics makes decentralization measurement challenging today.     & Discussion in~\Cref{sec:vbe-in-practice}.\\ 
                \hline 
    \end{tabular}
}
    \captionof{table}{Guidance implied by this paper's results regarding DAO decentralization.}
     \label{tab:recommendations}
\end{figure}

\paragraph{Apathy / inactivity whale and delegation.} 
One way to diminish the size of the inactivity whale is through delegation. Intuitively, if tokens associated with the inactivity whale are distributed between at least two delegatees in distinct clusters, then they come to represent distinct utility functions—--and thus contribute to decentralization. Our theorems also show that when the inactivity whale is large---with respect to delegatees---delegation increases decentralization. (Otherwise, delegation may or may not have this effect.)

\paragraph{Herding / voting privacy.} Herding arises because votes are publicly visible. Voting privacy in principle alleviates such pressure and therefore has a decentralizing effect. Snapshot, a popular platform for DAO voting, has recently implemented a form of privacy called \textit{shielded voting}~\cite{Snapshot:2022}. This form of privacy, however, is only ephemeral: Votes are private when submitted, but revealed at the end of the vote-casting period. So it is unclear that it can fully address the centralizing effects of herding. End-to-end verifiable voting systems have been proposed in the literature for decades that achieve both voting integrity and confidentiality~\cite{ali2016overview}. How to implemented them with token-based weighting is, to the best of our knowledge, though, an open problem.

\paragraph{Voter bribery.} DAOs today are largely 
centralized~\cite{sharma2024unpacking, feichtinger2023hidden, bribery}. Bribery may not be especially useful, as whales generally exert strong control and require relatively little coordination to align utility functions into a favorable voting bloc. Voter bribery, however, is a problem in many settings, both in political voting~\cite{mcgee2023often} and in corporate governance (see, e.g.,~\cite{Schickler:2023}). One implication of our results is that as DAO decentralization increases, in order for bribery to succeed, it will need to be systemic. DAO designers should therefore recognize large-scale bribery as a future risk.

\paragraph{Voting slates / bundling proposals.} As the practice of bundling proposals / measures has the goal of aligning utility functions, 
from the standpoint of \indexshort, it generally has a centralizing effect. DAOs may therefore wish to consider limiting the practice and instead explore way to unbundle multi-component proposals.

\paragraph{Data collection.}
As discussed in Section~\ref{sec:vbe-in-practice}, we found it challenging to collect full voting histories even for popular DAOs. A recommendation for the community is to establish and adhere to standards for archiving DAO voting data.

\newdiff{
\paragraph{Future work: practical insights from LHD.} In Section~\ref{sec:rl-decentralization}, we introduce the \hypothesisname (\hypothesisnameshort), a conceptual model for DAO decentralization, which we use to derive \indexshort. Yet, we believe \hypothesisnameshort has profound (practical) applications to DAO decentralization 
beyond just \indexshort. For example, techniques to increase diversity in MARL could form the basis for the development of mechanisms aimed at increasing DAO decentralization. Similarly, LHD could be used to motivate existing approaches to decentralization that otherwise lack principled motivation.
For example, existing proposals for rewarding the use of voting power (by voting or delegating) is akin to diversity-boosting in MARL via tailoring of reward functions. Exploring these (and other applications of \hypothesisnameshort) is an interesting direction for future work.}

\section{Related Work}
\label{sec:related-work}

\paragraph{DAOs.} Research literature on DAOs has been limited to date, but has included measurement studies~\cite{feichtinger2023hidden,sharma2024unpacking}, retrospectives on the failure of The DAO (e.g., \cite{dupont2017experiments}) and ways of addressing related technical flaws in smart contracts such as dangerous reentrancy 
(e.g.,~\cite{luu2016making}), DAO mechanism design (e.g.,~\cite{bahrani2023bidders}), and exploration of DAOs from the standpoint of legal theory 
(e.g.,~\cite{hassan2021decentralized}) and economics and governance (e.g.,~\cite{beck2018governance}). Works exploring measurement of DAOs' degree of decentralization most notably include Feichtinger et al.~\cite{feichtinger2023hidden}, who explore Gini and Nakamoto indices, as well as participation rates and the monetary cost of governance, Sharma et al.~\cite{sharma2024unpacking}, who consider various notions of entropy, participation rate, and graph-based measures of decentralization, and~\cite{wright2021measuring}, which taxonomizes DAOs by comparison with other autonomous systems. Sun et al.~use clustering to identify voting blocs in a study of MakerDAO~\cite{sun2022multiparty}. Also of note is the informal notion of ``credible neutrality,'' a community standard articulated in, e.g.,~\cite{Buterin:2020,Buterin:2023}.

\paragraph{Social choice and voting theory.}
A long line of work on social choice and voting theory investigates how best to aggregate preferences of individual voters (see, 
e.g,~\cite{fishburn2015theory,kelly2013social} for overviews)---the same functionality that DAOs seek to provide in the decentralized setting. There are some major differences between the classical and DAO settings, however. For instance, the permissionless nature of DAOs---and use of token weighting---changes the nature and meaning of voter participation. Further, while the threat of large-scale voter bribery is typically safe to ignore in classical voting, both due to the high likelihood of detecting such an attack, as well as the the challenge in coordinating the attack itself, it is a realistic threat in DAOs, as we have explained. 

\section{Conclusion}
\label{sec:conclusion}

We have introduced \indexlong (\indexshort), a metric for DAO decentralization stemming from a new model of DAOs that derives from multi-agent reinforcement learning (MARL). \indexshort’s sensitivity to aligned behavior among voters makes it a powerful tool, yielding insights well beyond those of existing token-distribution-based metrics. \indexshort yields both theoretical results and practical guidance for DAO communities, particularly through its observable variant \oindexshort. We envision that the open-source dashboard and analysis tools we introduce here will serve as springboards for principled future research and practices that enhance the governance efficacy and security of DAOs. 

\section*{Acknowledgments}

This work was funded by NSF CNS-2112751 and CNS-2112726, a Sui Academic Research Award, a Uniswap Foundation TLDR fellowship, and generous support from IC3 industry partners and sponsors. We also wish to thank the Optimism Foundation---especially Eliza Oak and Thomas Bialek---for helpful guidance and collaboration.

\bibliographystyle{plain}
\bibliography{main}

\appendix

\section{Additional Theorems and Proofs}\label{sec:A-additional-theorems}
In this section, we show the rest of our theorems and proofs from the theoretical implications of \indexshort (Section~\ref{sec:vbe-in-theory}). Due to space constraints, we defer some of our proofs to the extended version of the paper.

\begin{thm}[Apathy and \indexshort]
\label{thm:apathy}
    Let $(E, U'_{E, \players}, \tokens) = T_{\texttt{apath}}(\players, E, U_{E, \players}, \tokens)$ be the transformation where players $\hat{\players} \subseteq \players$ become $\epsilon$-apathetic, i.e., $\forall \player \in \hat{\players}$ $\forall e \in E,$ $|\util'_\player(e, \true)| \leq \epsilon$. The rest of the system remains unchanged. Then, if $\forall \player \in \hat{\players},$ $\tokens(\apath) \geq \tokens([\player])$, it follows that
    \begin{equation*}
        \indexshort_{C_{\epsilon}, \min}(E, \players, U_{E, \players}, \tokens) \geq \indexshort_{C_{\epsilon}, \min}(E, \players, U'_{E, \players}, \tokens).
    \end{equation*}
\end{thm}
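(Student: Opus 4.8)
The plan is to reduce the statement to the \indexshort Master Theorem (Theorem~\ref{thm:vbe-master}). Since $T_{\texttt{apath}}$ leaves the player set $\players$, the election set $E$, and the token map $\tokens$ unchanged, it suffices by Theorem~\ref{thm:vbe-master} to show that the largest voting bloc by token holdings does not shrink, i.e., $\tokens(B') \geq \tokens(B)$, where $B$ (resp.\ $B'$) is a largest cluster under $C_\epsilon$ before (resp.\ after) the transformation. Toward this, I would first pin down the apathetic bloc: the first assumed property of $C_\epsilon$ forces all mutually $\epsilon$-apathetic players into a single equivalence class, which before the transformation is $\apath$; after the transformation every player of $\hat{\players}$ is also $\epsilon$-apathetic, so the apathetic players now form one class $\apath'$ with $\apath \cup \hat{\players} \subseteq \apath'$, hence $\tokens(\apath') \geq \tokens(\apath)$. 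Note also that the only utility vectors that change are those of $\hat{\players}$, all of whom land in $\apath'$, so no other cluster gains members.

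Next I would do a two-way case split on whether $B$ meets $\hat{\players}$. If $B \cap \hat{\players} = \emptyset$, then no member of $B$ changes its utility, so all members of $B$ stay pairwise $C_\epsilon$-equivalent after the transformation; thus some post-transformation cluster contains all of $B$, and $\tokens(B') \geq \tokens(B)$. If $B \cap \hat{\players} \neq \emptyset$, pick any $\player \in B \cap \hat{\players}$; since $B$ is itself a cluster before the transformation, $[\player] = B$, so the hypothesis $\tokens(\apath) \geq \tokens([\player])$ gives $\tokens(\apath) \geq \tokens(B)$, and therefore $\tokens(B') \geq \tokens(\apath') \geq \tokens(\apath) \geq \tokens(B)$. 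In both cases $\tokens(B') \geq \tokens(B)$, and Theorem~\ref{thm:vbe-master} yields the claimed inequality on \indexshort.

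The argument is mostly bookkeeping, but the subtle point --- and the only place the hypothesis is used --- is the second case: when members of the previously largest bloc $B$ defect to the apathetic bloc, $B$ itself can lose tokens, and the assumption ``$\tokens(\apath) \geq \tokens([\player])$ for every newly apathetic $\player$'' is precisely what guarantees the (enlarged) apathetic bloc still dominates, so the maximum bloc size cannot drop. One should also double-check the edge case $B = \apath$ (possible if some players of $\hat{\players}$ were already $\epsilon$-apathetic before $T_{\texttt{apath}}$), which is covered by the first case since $\apath \subseteq \apath'$.
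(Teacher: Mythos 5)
Your proof is correct and follows essentially the same route as the paper's: identify the (enlarged) apathetic bloc as a single cluster, use the hypothesis $\tokens(\apath)\geq\tokens([\player])$ to bound the largest post-transformation bloc from below by the largest pre-transformation bloc, and invoke the Master Theorem. Your explicit case split on whether $B$ meets $\hat{\players}$ is in fact a slightly more careful rendering of the paper's terser ``either the apathetic bloc is now the largest, or the same bloc is largest in both instances.''
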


\begin{proof}
Let $B$ be the largest voting bloc by token holdings before $T_{\texttt{apath}}$. We first note that all apathetic voters belong to the same voting bloc $B'$, according to $C$: by the definition of $\epsilon$-apathetic, it follows that, for all $\player_i, \player_j \in \hat{\players}$ and $e \in E$,
\[|\util'_{\player_i}(e, \true)|, |\util'_{\player_j}(e, \true)| \leq \epsilon,\]

which corresponds precisely to the bloc of apathetic voters in $C$, containing all players in $\apath$. Then, by assumption, $\tokens(B') = \tokens(\apath) \geq \tokens([\player])$, $\forall \player \in \hat{\players}$. So, since no other blocs decrease in size, it follows that $\tokens(B') \geq \tokens(B)$: either the bloc that aggregates all apathetic voters is now the largest bloc, or the same bloc is the largest in both instances. Thus, from Theorem~\ref{thm:vbe-master}, it follows that
\small
\[\indexshort_{C_{\epsilon}, \min}(E, \players, U_{E, \players}, \tokens) \geq \indexshort_{C_{\epsilon}, \min}(E, \players, U'_{E, \players}, \tokens)\]
\normalsize
as desired.
\end{proof}
\normalsize

\begin{thm}[Delegation and \indexshort]
\label{thm:delegation}
Let $(E, U'_{E, \players}, \tokens') = T_{\texttt{deleg}}(\players, E, U_{E, \players}, \tokens)$ be the transformation where players $\hat{\players} \subseteq \players$, who are $\epsilon$-apathetic, instead delegate their votes to a set of delegates $D \subset \players$, i.e., $\tokens'(D) = \tokens(D) + \tokens(\hat{P})$ and $\tokens'(\hat{\players}) = 0$. The rest of the system remains unchanged. Then, if $\forall d \in D,$ $ \tokens(\apath) \geq \tokens'([d])$, it follows that,
\small
    \begin{equation*}
        \indexshort_{C_{\epsilon}, \min}(E, \players, U'_{E, \players}, \tokens') \geq \indexshort_{C_{\epsilon}, \min}(E, \players, U_{E, \players}, \tokens).
    \end{equation*}
\end{thm}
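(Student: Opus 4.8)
The plan is to reduce everything to the VBE Master Theorem (Theorem~\ref{thm:vbe-master}). Since $T_{\texttt{deleg}}$ only redistributes tokens among accounts, it preserves the total supply, so the Master Theorem applies verbatim. By that theorem it suffices to show that the largest voting bloc by token holdings does not grow under delegation, i.e. $\tokens'(B') \le \tokens(B)$, where $B$ and $B'$ are largest blocs of the pre- and post-delegation systems, respectively. Given $\tokens'(B') \le \tokens(B)$, the strict direction of the Master Theorem's biconditional handles the case $\tokens'(B') < \tokens(B)$ and the remark immediately following it handles the case $\tokens'(B') = \tokens(B)$; either way $\indexshort_{C_\epsilon,\min}$ of the post-delegation system is $\ge$ that of the pre-delegation system, which is exactly the claim.

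First I would pin down the bloc structure. Since $T_{\texttt{deleg}}$ leaves every player's utility profile unchanged, the partition $\players/\sim_{C_\epsilon}$ is the same before and after; only the token weights attached to the classes change. By the first assumed property of $C_\epsilon$, all $\epsilon$-apathetic players — in particular all of $\hat{\players} \subseteq \apath$ — lie in a single class, the inactivity whale, with weight $\tokens(\apath)$ before delegation. Delegation moves exactly the $\tokens(\hat{\players})$ tokens out of this class and into the classes $\{[d] : d \in D\}$, leaving every other class untouched; so the whale's weight drops to $\tokens(\apath) - \tokens(\hat{\players})$ while $\sum_{d \in D}\bigl(\tokens'([d]) - \tokens([d])\bigr) = \tokens(\hat{\players})$. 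Note also that $\tokens(B) \ge \tokens(\apath)$, since $\apath$ is itself one of the pre-delegation blocs and $B$ is the largest.

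Next I would case-split on which class realizes the post-delegation maximum $B'$. If $B' = [d]$ for some delegate $d \in D$, then the hypothesis $\tokens(\apath) \ge \tokens'([d])$ together with $\tokens(B) \ge \tokens(\apath)$ gives $\tokens'(B') \le \tokens(B)$. If $B'$ is the post-delegation inactivity whale, then $\tokens'(B') = \tokens(\apath) - \tokens(\hat{\players}) \le \tokens(\apath) \le \tokens(B)$. Otherwise $B'$ neither gained nor lost tokens, so $\tokens'(B') = \tokens(B') \le \tokens(B)$. In all cases $\tokens'(B') \le \tokens(B)$, and invoking Theorem~\ref{thm:vbe-master} closes the argument.

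The hard part is the bookkeeping around the inactivity whale rather than any computation: one must justify that $\apath$ really is a single well-defined bloc (before and after), that the hypothesis $\tokens(\apath) \ge \tokens'([d])$ is precisely what prevents an enlarged delegate bloc from overtaking the old maximum, and that no delegate is itself absorbed into the inactivity whale in a way that would funnel the delegated tokens back into the largest bloc. This last point is immediate if, as in the informal discussion of Result~\#3, the delegates are taken to be non-apathetic with pairwise-distinct utility profiles; otherwise it requires only a one-line check that any delegate lying in $\apath$ contributes its received tokens to $\apath$, which is already covered by the inactivity-whale case above. Everything beyond that is a direct application of the Master Theorem.
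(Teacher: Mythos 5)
Your proposal is correct and follows essentially the same route as the paper's proof: identify the apathetic players as a single bloc (the inactivity whale), note $\tokens(B)\ge\tokens(\apath)$ because the whale is itself a pre-delegation bloc, case-split on which class realizes the post-delegation maximum, and invoke the Master Theorem once $\tokens'(B')\le\tokens(B)$ is established in every case. Your extra remark about a delegate possibly lying in $\apath$ is a small point of care the paper elides, but it does not change the argument.
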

\normalsize

\begin{proof}
    Let $B$ by the largest voting bloc by token holdings before $T_{\texttt{deleg}}$. As discussed in the proof of Theorem~\ref{thm:apathy}, all players in $\hat{P}$ belong to the same voting bloc for all elections in $E$---the inactivity whale---since they are all part of the set of apathetic voters $\apath$. Let $B'$ be the largest voting bloc by token holdings after $T_{\texttt{deleg}}$; note that it may be the case that $B' = [d]$ for some $d \in D$.

    We first note that $B'$ is equal to either (1) $B$ itself, (2) the second largest voting bloc after $B$ before delegation, or (3) $[d]$, for some $d \in D$. That is, since the only blocs that change after $T_{\texttt{deleg}}$ are all the $[d]$ and the inactivity whale (which lost $\tokens(\hat{P})$ tokens), it must the be case that the new largest voting bloc is either the same one as before delegation, the second largest voting bloc before delegation (i.e., $B$ was the inactivity whale, which got fractionated by delegation), or one of the $[d]$ which increased in size.
    
    For (1) and (2), it is clearly the case that $\tokens(B) \geq \tokens'(B')$. Then, for (3), note that, by assumption, $\tokens(\apath) \geq \tokens'([d])$, for all $d \in D$. So, $\tokens(B) \geq \tokens(\apath) \implies \tokens(B) \geq \tokens'([d]) = \tokens'(B')$. 
    
    It follows then that, in all cases, $\tokens(B) \geq \tokens'(B')$. Thus, from Theorem~\ref{thm:vbe-master}, we get that
    \small
    \[\indexshort_{C_{\epsilon}, \min}(E, \players, U'_{E, \players}, \tokens') \geq \indexshort_{C_{\epsilon}, \min}(E, \players, U_{E, \players}, \tokens)\]
    \normalsize
    as desired.
\end{proof}

\begin{thm}[Herding and \indexshort]
\label{thm:herding}
    Let $(E, U'_{E, \players}, \tokens) = T_{\texttt{herd}}(\players, E, U_{E, \players}, \tokens)$ be the transformation where players $\hat{\players} \subseteq \players$ exhibit herding toward, without loss of generality, \true. That is, for all $P \in \hat{P}$ and $e \in E$, the monetary reputational cost of voting for \false\ is greater than or equal to
    $\max(2 \cdot \util_{\player}(e,\false]) + \epsilon, 0)$ for some constant $\epsilon$. The rest of the system remains unchanged. Then, it follows that
    
    \begin{equation*}
        \indexshort_{C_{\epsilon}, \min}(E, \players, U_{E, \players}, \tokens) \geq \indexshort_{C_{\epsilon}, \min}(E, \players, U'_{E, \players}, 
        \tokens).
    \end{equation*}
\end{thm}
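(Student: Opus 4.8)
The plan is to apply the \indexshort Master Theorem (Theorem~\ref{thm:vbe-master}) with the system transformation $T = T_{\texttt{herd}}$: it then suffices to show that the largest bloc by token holdings does not shrink, i.e., $\tokens(B') \ge \tokens(B)$, where $B$ and $B'$ are the largest blocs (by token weight) before and after $T_{\texttt{herd}}$, respectively. As in the apathy and delegation proofs, the whole argument reduces to understanding how herding reshapes the blocs of the players in $\hat\players$.

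First I would show that after $T_{\texttt{herd}}$ every herder's utility is uniformly aligned toward $\true$. Fix $\player \in \hat\players$ and $e \in E$. Voting $\false$ now costs $\player$ a reputational penalty of at least $\max(2\util_\player(e,\false) + \epsilon, 0)$ on top of the raw payoff $\util_\player(e,\false)$; using the normalization $\util_\player(e,\true) = -\util_\player(e,\false)$ and splitting on whether $\max(2\util_\player(e,\false)+\epsilon,0)$ equals $0$ or $2\util_\player(e,\false)+\epsilon$, a short computation shows that the re-normalized post-transformation utility satisfies $\util'_\player(e,\true) > 0$. Since this holds for every $e \in E$, the vector $U'_{E,\player}$ has strictly positive coordinates, for \emph{every} $\player \in \hat\players$. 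By the second clustering assumption of Section~\ref{subsec:implications-of-vbe} (players whose election utilities all share a sign are grouped together), all of $\hat\players$ --- together with any player outside $\hat\players$ whose utility was already positive on every election --- lands in a single post-transformation bloc, which I will call $B'_{\true}$.

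Finally I would compare the two systems. No bloc other than $B'_{\true}$ can gain tokens under $T_{\texttt{herd}}$: players outside $\hat\players$ keep their utilities, and a herder can only leave its old bloc, never join a non-$\true$-aligned one. Hence the new largest bloc is either $B$ itself (with token weight $\tokens(B)$, in the case where $\tokens(B)$ is still maximal) or $B'_{\true}$, which has absorbed at least all the herders previously scattered across smaller blocs; in either case $\tokens(B') \ge \tokens(B)$, and invoking Theorem~\ref{thm:vbe-master} yields the stated inequality. I expect the main obstacle to be the sign-alignment step: the $\max(\cdot,0)$ truncation and the two-sided re-normalization of the utility must be handled carefully so that $\util'_\player(e,\true)>0$ really does follow in all cases. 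The residual corner case in which $B$ itself straddled $\hat\players$ and a non-$\true$-aligned bloc is dealt with exactly as the analogous corner cases in Theorems~\ref{thm:apathy} and~\ref{thm:delegation}, under the same mild clustering assumptions already in force.
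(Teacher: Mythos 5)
Your proposal is correct and follows essentially the same route as the paper's proof: show that the reputational penalty forces $\util'_\player(e,\true)>0$ for every herder and every election, invoke the sign-alignment clustering assumption to merge all of $\hat{\players}$ into one bloc, conclude $\tokens(B')\geq\tokens(B)$, and finish with the Master Theorem. Your treatment is if anything slightly more careful than the paper's about the case split on the $\max(\cdot,0)$ truncation and about which bloc ends up largest.
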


\begin{proof}
Let $B$ be the largest voting bloc by token holdings before $T_{\texttt{herd}}$. Note that, after $T_{\texttt{herd}}$, all voters in $\hat{P}$ belong to the same voting bloc $B'$: for every $\player \in \hat{\players}$, $U'_{E, \player}$ will consist of only positive values: either $\player$ preferred an outcome of \true\ in $e$ to begin with, or their monetary utility of \true\ is now $|\util_\player(e, \false)| + \epsilon$. Thus, since $\sgn{\util_\player(e, \true)} = 1$ for all $e \in E$, all of $\hat{P}$ consists of a single voting bloc $B'$ according to $C$. 

It follows then that $\tokens(B') \geq \tokens(B)$, as either the ``new'' voting bloc $B'$ is now the largest bloc, or the same bloc is the largest before and after $T_{\texttt{mirr}}$. Then, from Theorem~\ref{thm:vbe-master}, it follows that
\[\indexshort_{C_{\epsilon}, \min}(E, \players, U_{E, \players}, \tokens) \geq \indexshort_{C_{\epsilon}, \min}(E, \players, U'_{E, \players}, \tokens)\]
as desired. 
\end{proof}

Note that the corollary that privacy \emph{increases} decentralization follows directly via a proof by contradiction of Theorem~\ref{thm:herding}).

\paragraph{Result \#5: voting slates.} We model a player's utility for a slate of elections simply by adding the utilities of the underlying proposals. That is, for all $\player \in \players$ and some election $\mathcal{E}$ comprised of some subset of elections of $E$, the utility of $\player$ in $\mathcal{E}$ is:

\[\util_\player(\mathcal{E}, \true) = \sum\limits_{e \in \mathcal{E}} \util_\player(e, \true).\]

Voting slates are generally used to ``hide'' unpopular proposals among a larger set of benign, popular proposals, and thus increase their chances of passing. We model this by saying that if two $\player_i, \player_j$ have aligned utilities (according to $C$) on all proposals underlying $\mathcal{E}$, then they will agree on $\mathcal{E}$ itself, i.e., $C_\epsilon(U_{E, \player_i}, U_{E, \player_j}) = 1$$\implies \sgn{\sum\limits_{e \in \mathcal{E}} \util_{\player_i}(e, \true)} = \sgn{\sum\limits_{e \in \mathcal{E}} \util_{\player_j}(e, \true)}$.

\begin{thm}[Voting Slates and \indexshort]
\label{thm:voting-slates}
    Let $(E', U'_{E', \players}, \tokens) = T_{\texttt{slates}}(\players, E, U_{E, \players}, \tokens)$ be the transformation where all elections $E$ are bundled together into slates to form a smaller set of elections $E'$. The rest of the system remains unchanged. Then, it follows that
    \small
    \begin{equation*}
        \indexshort_{C_{\epsilon}, \min}(E, \players, U_{E, \players}, \tokens) \geq \indexshort_{C_{\epsilon}, \min}(E', \players, U'_{E', \players}, \tokens).
    \end{equation*}
    \normalsize
\end{thm}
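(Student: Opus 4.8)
The plan is to instantiate the \indexshort Master Theorem (Theorem~\ref{thm:vbe-master}) with the transformation $T_{\texttt{slates}}$, exactly as the preceding theorems do. Concretely, I would let $B$ be the largest voting bloc by token holdings in the original system $(E, U_{E, \players}, \tokens)$ and argue that there is a bloc $B'$ in the bundled system $(E', U'_{E', \players}, \tokens)$ with $\tokens(B') \geq \tokens(B)$; Theorem~\ref{thm:vbe-master} then immediately yields the desired inequality on \indexshort (with equality when no bloc strictly grows). Note that the token mapping $\tokens$ is unchanged and the player set $\players$ is unchanged, so only the clustering (induced by $C$ acting on the new utility vectors $U'_{E', \players}$) can differ.

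The core claim is that bundling can only \emph{coarsen} the partition: if $\player_i \sim_C \player_j$ under the original elections $E$, then $\player_i \sim_C \player_j$ under the bundled elections $E'$. This is precisely the modeling assumption stated just before the theorem, namely that $C_\epsilon(U_{E, \player_i}, U_{E, \player_j}) = 1 \implies \sgn{\sum_{e \in \mathcal{E}} \util_{\player_i}(e, \true)} = \sgn{\sum_{e \in \mathcal{E}} \util_{\player_j}(e, \true)}$ for each slate $\mathcal{E}$, combined with the standing assumption that $C$ places two players in the same cluster whenever their utilities agree in sign on every election. Applying the assumption to each slate $\mathcal{E} \in E'$ shows $\player_i$ and $\player_j$ have sign-agreeing utilities across all of $E'$, hence $C(U'_{E', \player_i}, U'_{E', \player_j}) = 1$. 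Therefore every original equivalence class is contained in some class of the bundled partition, so each bundled bloc is a union of original blocs; in particular the bloc $B'$ containing the original largest bloc $B$ satisfies $\tokens(B') \geq \tokens(B)$ since $\tokens$ is monotone under taking supersets of account-sets. Invoking Theorem~\ref{thm:vbe-master} with this $B, B'$ closes the argument.

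I expect the only real subtlety to be justifying that the modeling assumption applies uniformly to \emph{every} slate in $E'$, not just a single one, and handling the edge case where a slate consists of a single (unbundled) election --- there the sign-agreement is inherited trivially. A second minor point to be careful about is the apathy-type cluster: if two players are both $\epsilon$-apathetic on all of $E$, one should check they remain grouped after bundling; this either follows from the same sign-agreement reasoning or can be side-stepped by noting the assumption only needs to push original clusters \emph{forward}, which suffices for the union structure. Neither of these is a genuine obstacle; the proof is essentially a one-line application of the Master Theorem once the partition-coarsening claim is stated. I would write it in the same compact style as the proof of Theorem~\ref{thm:herding}: identify $B$, argue $\tokens(B') \geq \tokens(B)$ via the coarsening claim, then cite Theorem~\ref{thm:vbe-master}.
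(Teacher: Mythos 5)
Your proposal is correct and follows essentially the same route as the paper's proof: both argue via the modeling assumption that sign-agreement on all underlying elections is preserved on each slate, conclude that the original largest bloc $B$ is contained in some bundled bloc $B'$ (so $\tokens(B') \geq \tokens(B)$), and then invoke Theorem~\ref{thm:vbe-master}. Your phrasing in terms of partition coarsening is, if anything, slightly more careful than the paper's, which only applies the argument to $B$ itself and loosely asserts that $B'$ contains ``strictly more'' players.
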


\begin{proof}
Let $B$ be the largest voting bloc by token holdings before $T_{\texttt{slates}}$. Then, note that all players in $B$ are still in the same voting bloc $B'$ after $T_{\texttt{slates}}$: since $C_\epsilon(U_{E, \player_i}, U_{E, \player_j}) = 1$ for every pair of players in $B$, by assumption, it follows that \[\forall \mathcal{E} \in E',\ \sgn{\sum\limits_{e \in \mathcal{E}} \util_{\player_i}(e, \true)} = \sgn{\sum\limits_{e \in \mathcal{E}} \util_{\player_j}(e, \true)}.\]

Conversely, players who did not belong to $B$ may, in fact, join $B'$ after $T_{\texttt{slates}}$: even if the players disagree in some of the underlying proposals for a particular slate $\mathcal{E}$, they may cast the same overall vote for the entire slate. As such, $B'$ contains strictly more players than $B$, which implies that $\tokens(B') \geq \tokens(B)$. Then, from Theorem~\ref{thm:vbe-master}, it follows that
\[\indexshort_{C_{\epsilon}, \min}(E, \players, U_{E, \players}, \tokens) \geq \indexshort_{C_{\epsilon}, \min}(E', \players, U'_{E', \players}, \tokens)\]
\end{proof}

\paragraph{Result \#7: bribery.} Before presenting our results, we first introduce some additional notation. We define $\bribe_\player \colon E \times \{\true, \false\} \to \mathbb{R}$ to be such that it is possible for player $P$ to achieve an outcome of $\true$ (resp., $\false$) in a particular election $e$ via bribery for any expenditure greater than $\bribe_\player(e, \true)$ (resp., $\bribe_\player(e, \false)$). Note that we make the simplifying assumption that bribery costs are independent across elections. We assume that bribing a given $\player$ to flip its vote from $\true$ to $\false$ (respectively, $\false$ to $\true$) costs $\max(2\cdot\util_{\player}(e,\true) + \epsilon, 0)$ (respectively, $\max(2 \cdot \util_{\player}(e,\false) + \epsilon,0)$), for some constant $\epsilon$. Successful bribery to achieve an outcome of $\true$ in $e$ means flipping enough votes to cross a certain threshold $q$ of votes for $\true$ in $e$ (and vice versa for $\false$). For example, a typical value for $q$ may be $q = 0.5$. We note that this represents the threshold to \emph{ensure} the desired outcome in an election, and not just to win it.

\begin{thm}[Bribery and \indexshort]
\label{thm:bribery1}
    Let $(E, U'_{E, \players}, \tokens) = T_{\texttt{bribe}}(\players, E, U_{E, \players}, \tokens)$ be the transformation where an entity successfully bribes players $\hat{\players} \subseteq \players$ in elections $E$ to achieve an outcome of, without loss of generality, $\true$. The rest of the system remains unchanged. Then, it follows that
    
    \begin{equation}
        \indexshort_{C_{\epsilon}, \min}(E, \players, U_{E, \players}, \tokens) > \indexshort_{C_{\epsilon}, \min}(E, \players, U'_{E, \players}, \tokens).
    \end{equation}
    
\end{thm}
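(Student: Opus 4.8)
The plan is to reuse the template of the master theorem (Theorem~\ref{thm:vbe-master}) and of the herding result (Theorem~\ref{thm:herding}): describe the new clustering, locate the largest bloc before and after $T_{\texttt{bribe}}$, compare their token holdings, and then invoke Theorem~\ref{thm:vbe-master}. First I would argue that after $T_{\texttt{bribe}}$ every bribed player $\player \in \hat\players$ has an all-positive utility vector: in an election $e$ where $\player$ already preferred $\true$ nothing relevant changes, while in an election where $\player$ preferred $\false$ the bribe of $\max(2\util_\player(e,\false)+\epsilon,0)$ raises the utility of $\true$ above zero (using $\util_\player(e,\true)=-\util_\player(e,\false)$). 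Hence all of $\hat\players$ share the sign pattern $(+,\dots,+)$, so by the second clustering assumption they lie in one and the same bloc $B'$ after the transformation; $B'$ also contains every player whose genuine preference is $\true$ in every election (in particular the briber, if she is herself a voter), so $B'$ is precisely the ``briber's bloc'' discussed in Result \#6.

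Next I would show $\tokens(B') > \tokens(B)$, where $B$ is the largest bloc before $T_{\texttt{bribe}}$. Since $T_{\texttt{bribe}}$ leaves the players, elections, and token ownership untouched and only alters the utilities of $\hat\players$, the only bloc that can gain tokens is $B'$; every other post-transformation bloc is contained in some old bloc and hence holds at most $\tokens(B)$. It therefore suffices to lower-bound $\tokens(B')$. Because the bribery is successful, after $T_{\texttt{bribe}}$ the $\true$-voting token mass crosses the outcome threshold $q$ in every election, and every such $\true$-voter sits inside $B'$; thus $\tokens(B') \geq q \cdot \tokens(\players)$. In the standard regime $q \geq 1/2$ this makes $B'$ a (new) strict majority bloc, hence the unique largest bloc after $T_{\texttt{bribe}}$, so $\tokens(B') > \tokens(B)$ --- here one uses that the bribe was non-trivial (otherwise $\hat\players$ was already aligned toward $\true$ and no bribe was needed) and that the DAO was not already captured by a single bloc of the same token share.

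Finally, I would apply Theorem~\ref{thm:vbe-master} with this $B$ and $B'$: the total token count is unchanged by $T_{\texttt{bribe}}$, so the strict inequality $\tokens(B') > \tokens(B)$ passes through the strictly monotone $\log_2$ exactly as in the master-theorem chain, yielding $\indexshort_{C_\epsilon,\min}(E,\players,U_{E,\players},\tokens) > \indexshort_{C_\epsilon,\min}(E,\players,U'_{E,\players},\tokens)$. The remark following Theorem~\ref{thm:vbe-master} about a new majority bloc is precisely what supplies the strictness.

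The main obstacle is the middle step --- converting ``successful bribery'' into a lower bound on $\tokens(B')$ that strictly beats $\tokens(B)$ in all cases. The all-positive-cluster argument is routine, but one must rule out the degenerate possibility that some pre-existing bloc already held the threshold share of tokens (so that $B'$ merely ties it). The cleanest resolution is to work within the assumptions the bribery model already presupposes (worst-case voting, $q \geq 1/2$), where $B'$ genuinely becomes a new majority and the master theorem's remark applies directly; for arbitrary $q$, the non-strict comparison $\tokens(B') \geq \tokens(B)$ still holds (no other bloc can grow), and strictness then needs only the mild extra hypothesis that $\hat\players$ is not already contained in the pre-bribery largest bloc.
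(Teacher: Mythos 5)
Your proposal follows essentially the same route as the paper's own proof: show that every bribed player ends up with an all-positive utility vector (either the preference was already $\true$, or the bribe of $\max(2\cdot\util_\player(e,\false)+\epsilon,0)$ flips the sign), conclude from the sign-pattern clustering assumption that all of $\hat{\players}$ lands in a single bloc $B'$, observe that no other bloc can grow, and invoke Theorem~\ref{thm:vbe-master}. Where you differ is in taking the strict inequality in the theorem statement seriously: the paper's own proof in fact only derives $\tokens(B')\geq\tokens(B)$ and concludes with a non-strict ``$\geq$'' despite the theorem claiming ``$>$'', so your attempt to close that gap is a genuine improvement in care. That said, your specific lower bound $\tokens(B')\geq q\cdot\tokens(\players)$ does not follow when $|E|>1$: successful bribery guarantees that the $\true$-voting token mass exceeds $q$ \emph{in each election separately}, but the per-election $\true$-coalitions need not coincide, and only players whose utilities are positive in \emph{every} election of $E$ are guaranteed to join $B'$; the intersection of the per-election majorities can be far smaller than $q\cdot\tokens(\players)$. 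Your fallback---non-strict inequality always holds, and strictness requires the extra hypothesis that $\hat{\players}$ is not already absorbed into the pre-bribery largest bloc---is the honest resolution, and it correctly identifies a looseness that the paper itself leaves unaddressed.
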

\begin{proof}
Let $B$ be the largest voting bloc by token holdings before $T_{\texttt{bribe}}$. First, note that, after $T_{\texttt{bribe}}$, all voters in $\hat{\players}$ belong to the same voting bloc $B'$. Recall that, in our DAO abstraction, bribing a player $\player$ to flip its vote in election $e$ from $\false$ to $\true$ costs $\max(2 \cdot \util_{\player}(e,\false) + \epsilon,0)$. So, for every $P \in \hat{\players}$ and $e \in E$, either $\util_P(e, \true)$ was already positive to begin with, or it is now $|\util_{\player}(e,\false)| + \epsilon$. Then, since $\sgn{\util_\player(e, \true)} = 1$ for all $e \in E$, all of $\hat{P}$ consists of a single voting bloc $B'$ according to \clustershort. 

It follows then that $\tokens(B') \geq \tokens(B)$, as either the ``new'' voting bloc $B'$ is now the largest bloc, or the same bloc is the largest before and after $T_{\texttt{bribe}}$. Then, from Theorem~\ref{thm:vbe-master}, it follows that
\small
\[\indexshort_{C_{\epsilon}, \min}(E, \players, U_{E, \players}, \tokens) \geq \indexshort_{C_{\epsilon}, \min}(E, \players, U'_{E, \players}, \tokens)\]
\normalsize
as desired.
\end{proof}

\begin{thm}[Internal Bribery and \indexshort]
\label{thm:bribery2}
    Let $(E, U'_{E, \players}, \tokens) = T_{\texttt{bribe}}(\players, E, U_{E, \players}, \tokens)$ be the transformation where $U'_{E, \players}$ is some arbitrary change in the utilities of the players. The rest of the system remains unchanged. Assume that an entity in $\players$ needs to bribe other players holding a total of at least $n_1$ and $n_2$ tokens to guarantee an outcome of $\true$ in elections $E$ before and after $T_{\texttt{bribe}}$, respectively. Then, it follows that

    \begin{align*}
        n_1 > n_2 \iff &\indexshort_{C_{\epsilon}, \min}(E, \players, U'_{E, \players}, \tokens) \\ <\ &\indexshort_{C_{\epsilon}, \min}(E, \players, U_{E, \players}, \tokens).
    \end{align*}
    
\end{thm}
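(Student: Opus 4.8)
The plan is to reduce this to the \indexshort Master Theorem (Theorem~\ref{thm:vbe-master}) by expressing the minimum bribery cost as a strictly decreasing function of the size of the largest voting bloc. Write $T$ for the (constant) total token supply, and let $B$ and $B'$ be the largest voting blocs by token holdings under $U_{E,\players}$ and $U'_{E,\players}$ respectively, exactly as in the Master Theorem; note that $T_{\texttt{bribe}}$ leaves $\players$, $E$, and $\tokens$ fixed and only perturbs utilities, so it is a legitimate system transformation for that theorem.

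First I would pin down $n_1$ and $n_2$. Recall the per-player bribery model from above: flipping a vote costs a finite amount, and a bribed player's utility thereafter points in the briber's preferred direction in every election, so bribed players are absorbed into the briber's voting bloc. To \emph{guarantee} the outcome $\true$, the briber must itself control at least a $q$-fraction of the tokens, i.e.\ $qT$ tokens; the votes it receives ``for free'' are those of its own bloc, and the cheapest briber is one sitting in the \emph{largest} voting bloc, since that maximizes the free tokens and hence minimizes the tokens that must be purchased. Thus $n_1 = qT - \tokens(B)$ and $n_2 = qT - \tokens(B')$. (I would read $n_i$ as the minimum cost over internal briber candidates, which is the reading that makes the statement tight; I would also record the standing hypothesis --- implicit in the ``Result \#6'' discussion --- that bribery is genuinely needed, i.e.\ $\tokens(B), \tokens(B') \le qT$, so neither cost is truncated at $0$. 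Without this, both $n_i$ could vanish while the right-hand side still varies.)

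With this in hand the rest is mechanical. Since $x \mapsto qT - x$ is strictly decreasing, $n_1 > n_2 \iff \tokens(B') > \tokens(B)$. Following the Master Theorem's normalization, $\indexshort_{C_{\epsilon},\min}$ equals $-\log_2$ of the largest bloc's share of the (fixed) total supply, so it is strictly decreasing in the size of the largest bloc; concretely, the chain of equivalences inside the proof of Theorem~\ref{thm:vbe-master} holds verbatim with strict inequalities, yielding $\tokens(B') > \tokens(B) \iff \indexshort_{C_{\epsilon},\min}(E,\players,U'_{E,\players},\tokens) < \indexshort_{C_{\epsilon},\min}(E,\players,U_{E,\players},\tokens)$. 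Composing the two equivalences proves the theorem.

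The main obstacle is the first step: justifying that the minimum number of tokens that must be corrupted to guarantee $\true$ is exactly $qT - \tokens(B)$. This needs care about what ``guarantee'' entails (the briber must \emph{control} the winning threshold rather than rely on others' votes), about why the largest bloc --- and not, say, the union of all blocs already favoring $\true$ --- is the right quantity (a guarantee only lets the briber count its own bloc), and about the boundary regime where a bloc already clears the threshold, which is exactly where the non-degeneracy hypothesis is used. Once $n_1$ and $n_2$ are identified with $qT - \tokens(B)$ and $qT - \tokens(B')$, the conclusion is an immediate consequence of Theorem~\ref{thm:vbe-master}.
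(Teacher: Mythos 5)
Your proposal is correct in approach, and it is the argument the paper intends: the paper itself states Theorem~\ref{thm:bribery2} without proof in this version (deferring it to the extended version), but its Result \#6 discussion and the Master Theorem template make clear that the intended route is exactly yours --- identify the minimum internal bribery cost with $qT - \tokens(B)$ where $B$ is the largest voting bloc, observe this is strictly decreasing in $\tokens(B)$, and invoke the strict form of Theorem~\ref{thm:vbe-master}. Your two caveats are genuine and worth keeping: the statement is only true under the reading that $n_i$ is minimized over the choice of internal briber (a briber in a smaller bloc would need to buy more tokens, and min-entropy \indexshort\ sees only the largest bloc, so the equivalence would fail for a fixed non-whale briber), and the non-degeneracy condition $\tokens(B), \tokens(B') \le qT$ is needed to rule out the case where both costs are truncated to zero while the bloc sizes (and hence the two \indexshort\ values) still differ. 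One further point you could make explicit: counting the briber's entire bloc as ``free'' votes for \true\ in every election of $E$ presupposes that the briber's own utility favors \true\ throughout, so that bloc alignment (same sign of utility per election, per the paper's standing assumptions on $C$) translates into reliable \true\ votes; this is implicit in the paper's bribery model but deserves a sentence. With those assumptions recorded, the reduction to the Master Theorem is complete and correct.
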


This result sheds light on the scale of bribery in the case where the briber is a malicious tokenholder a priori. Conversely, the briber may instead be some external entity. In this case, decentralization also raises the risk of systemic bribery: if there are large players in the system, the briber can directly coordinate with whales to achieve their desired election outcome. If, however, the DAO is highly decentralized, the outcome of the election depends on many stakeholders, which thus requires large-scale coordination among these. More formally:

\begin{thm}[External Bribery and \indexshort]
\label{thm:bribery3}
    Let $(E, U'_{E, \players}, \tokens) = T_{\texttt{bribe}}(\players, E, U_{E, \players}, \tokens)$ be the transformation where $U'_{E, \players}$ is some arbitrary change in the utilities of the players. The rest of the system remains unchanged. Let $n_1$ and $n_2$ be the minimum number of players that an external entity needs to corrupt to guarantee an outcome of $\true$ in elections $E$ before and after $T_{\texttt{bribe}}$, respectively. Then, it follows that

    \begin{align*}
        n_1 > n_2 \iff &\indexshort_{C_{\epsilon}, \min}(E, \players, U'_{E, \players}, \tokens) \\ <\ &\indexshort_{C_{\epsilon}, \min}(E, \players, U_{E, \players}, \tokens).
    \end{align*}
    
\end{thm}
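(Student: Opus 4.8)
The plan is to reduce Theorem~\ref{thm:bribery3} to the \indexshort Master Theorem, exactly as in the proofs of Theorems~\ref{thm:bribery1} and~\ref{thm:bribery2}. Since $T_{\texttt{bribe}}$ perturbs only the utilities, the player set $\players$, the election set $E$, the token map $\tokens$, and the total supply $T \coloneqq \tokens(\players)$ are all unchanged; let $B$ and $B'$ be largest voting blocs by token holdings under $C_{\epsilon}$ before and after $T_{\texttt{bribe}}$. By Theorem~\ref{thm:vbe-master} and the remark following it (equality iff $\tokens'(B') = \tokens(B)$), we have $\tokens'(B') > \tokens(B) \iff \indexshort_{C_{\epsilon},\min}(E,\players,U'_{E,\players},\tokens) < \indexshort_{C_{\epsilon},\min}(E,\players,U_{E,\players},\tokens)$. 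It therefore suffices to prove the bridging equivalence $n_1 > n_2 \iff \tokens'(B') > \tokens(B)$: the external briber's minimum corruption set shrinks precisely when the largest voting bloc grows.

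The core of the argument is a lemma characterizing the external entity's cheapest strategy to \emph{guarantee} outcome $\true$ across all of $E$. The briber can count ``for free'' only on players whose utility sign is $\true$ in every election; by the two clustering assumptions in force in this section ($\epsilon$-apathetic players cluster together, and players whose utilities agree in sign on every election cluster together), these reliable supporters form a single voting bloc, which the briber selects to have maximal token weight --- that is, $B$ (resp.\ $B'$) --- and then corrupts further players, greedily in decreasing order of token holdings, until the pro-$\true$ weight exceeds the threshold $q$ in the worst election. (The $\epsilon$-margin appearing in the bribery-cost functions of Theorems~\ref{thm:bribery1} and~\ref{thm:bribery2} only shifts which players count as already aligned, and in the ordinal-utility regime emphasized in this section it is absorbed into $C_{\epsilon}$.) The number of corruptions this requires is a strictly decreasing function of the free core's token weight, whence $n_1 > n_2 \iff \tokens(B) < \tokens'(B')$; together with the reduction above this is the theorem, and the boundary case $\tokens'(B') = \tokens(B)$ (giving $n_1 = n_2$ and equal \indexshort) is exactly the equality case of the remark after Theorem~\ref{thm:vbe-master}, which supplies the strict ``$>$'' on both sides of the $\iff$.

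The main obstacle is making this lemma precise. One must show that no cleverer coalition-building lets the external briber exploit finer bloc structure --- so that $n$ is genuinely governed by the \emph{largest} bloc's token weight rather than, say, a Nakamoto-style prefix of several blocs or the player-count of the pro-$\true$ core --- and one must pin down the worst-case-over-elections quantification hidden in the word ``guarantee,'' along with the roles of the threshold $q$ and the $\epsilon$-apathy margin; this may need a mild extra regularity assumption on $C_{\epsilon}$ beyond the two already stated, which (as the section notes) is routine for ordinal utilities. Once the lemma is established, the remainder is a one-line invocation of Theorem~\ref{thm:vbe-master}, mirroring the proofs of Theorems~\ref{thm:apathy}, \ref{thm:delegation}, \ref{thm:herding}, \ref{thm:voting-slates}, and~\ref{thm:bribery1}; the external-versus-internal contrast discussed before the statement --- whales shrink $n$ directly, whereas smallholder-dominated DAOs force systemic corruption --- serves as a consistency check on the direction of the monotonicity.
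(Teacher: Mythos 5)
The paper does not actually contain a proof of Theorem~\ref{thm:bribery3}: it is one of the proofs deferred to the extended version (``Due to space constraints, we defer some of our proofs to the extended version of the paper''). So there is no in-paper argument to compare against line by line. Your overall strategy---prove a bridging equivalence between the corruption count and the token weight of a distinguished bloc, then invoke Theorem~\ref{thm:vbe-master} with the strict/equality cases of the remark following it---is exactly the template every proved theorem in Appendix~\ref{sec:A-additional-theorems} follows, and it matches the informal discussion of Result \#6, so the plan is the right one.

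However, the bridging lemma as you state it has two genuine gaps, the first of which you flag but do not resolve. First, you identify the briber's ``free core'' (players whose utility sign is $\true$ in every election) with the \emph{largest} bloc $B$. Nothing forces this: the largest bloc by token holdings could be the inactivity whale or a bloc aligned with $\false$, and $\indexshort_{C_\epsilon,\min}$ is governed by whichever bloc is largest, not by the pro-$\true$ one. If $T_{\texttt{bribe}}$ shrinks the pro-$\true$ core (raising $n$) while enlarging an anti-$\true$ bloc (lowering min-entropy), the biconditional fails; the theorem therefore needs either an explicit assumption that the pro-$\true$ core is and remains the maximal bloc, or a different characterization of $n$ (e.g., a Nakamoto-style prefix over blocs), and your sketch does not supply either. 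Second, the claim that the number of corruptions is a \emph{strictly} decreasing function of the free core's token weight cannot be right: $n$ is integer-valued, so an arbitrarily small strict increase $\tokens'(B') > \tokens(B)$ can leave $n$ unchanged, giving $n_1 = n_2$ while the right-hand side of the theorem's ``$\iff$'' is strict. Weak monotonicity is all your greedy argument yields, and that is not enough for the stated biconditional. Note also that, unlike Theorem~\ref{thm:bribery2}, this theorem counts \emph{players} rather than \emph{tokens}, so the greedy count additionally depends on the token distribution over the complement of the free core, which $T_{\texttt{bribe}}$ does not fix in any useful way for you. These are the points a complete proof would have to pin down, and your proposal leaves them open.
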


\paragraph{Result \#7: quadratic voting.} Our formalism captures the relationship between quadratic voting and bribery (which has been informally identified by prior work~\cite{philquadvotingdaos}). We define ``small'' accounts to be, concretely, those whose fraction of the total tokens increases with quadratic voting in place, and thus have their impact amplified. More formally, we denote that a player $\player \in \players$ benefits from quadratic voting by $\qv(\player, \tokens) = 1$, where 

\[\qv(\player, \tokens) = 1 \iff \frac{\tokens(\player)}{\sum\limits_{p \in \players} \tokens(p)} < \frac{\sqrt{\tokens(\player)}}{\sum\limits_{p \in \players} \sqrt{\tokens(p)}}.\]

The relationship between quadratic voting and bribery hinges on whether the cost to bribe a player is the same with or without quadratic voting. 
Whether quadratic voting changes a player's utility or not will vary across systems. Broadly speaking, if DAO members take governance seriously and are invested in election outcomes, quadratic voting indeed changes utilities: since smaller accounts become more ``pivotal'' as a result of quadratic voting, their utilities increase correspondingly. Conversely, if members have little interest in governance, the fact that their vote can now have a greater impact in the election will not change their utilities. As such, the nature of a community must be taken into account when deciding to use quadratic voting.

\begin{thm}[Quadratic Voting and Bribery]
\label{thm:quadratic-voting}
Let $(E', U_{E', \players}, \tokens) = T_{\texttt{quad}}(\players, E, U_{E, \players}, \tokens)$ be the transformation where all elections $E$ employ quadratic voting. We denote the election corresponding to $e \in E$ by $e' \in E'$. Let $f$ and $f'$ be the fraction of total votes that a bribing entity is able to control for some fixed expenditure $t$ in elections $E$ and $E'$, respectively. Then, it follows that
\small
    \begin{equation*} 
        f < f' \iff \exists \hat{\players} \subseteq \players \mid \forall \player \in \hat{\players},\ \Big(\qv(\player, \tokens) = 1 \wedge U_{E, \player} = U_{E', \player}\Big)
    \end{equation*}
\end{thm}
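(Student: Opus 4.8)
The plan is to fix a concrete reading of $f$ and $f'$: the briber commits to a \emph{bribery plan} --- a set $\hat{\players}\subseteq\players$ to bribe together with a payment to each member, the payments taken to be the minimal bribe costs $\max(2\cdot\util_{\player}(e,\cdot)+\epsilon,0)$ in the original elections, for total expenditure $t$ --- and $f$ (resp.\ $f'$) denotes the fraction of total vote-weight this plan controls when the elections are run normally (resp.\ under quadratic voting), where a bribed player $\player$ contributes weight $\tokens(\player)$ out of $T\coloneqq\sum_{p\in\players}\tokens(p)$ without QV and weight $\sqrt{\tokens(\player)}$ out of $T_{\mathsf{quad}}\coloneqq\sum_{p\in\players}\sqrt{\tokens(p)}$ with it (taking the briber to be external, or equivalently counting only the bribed weight). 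Two elementary observations drive the argument. (A) The bribe cost of $\player$ depends only on $\util_{\player}(e,\cdot)$, hence only on $U_{E,\player}$, and is nondecreasing in $|\util_{\player}(e,\true)|$; so it is unchanged under QV exactly when $U_{E,\player}=U_{E',\player}$, and --- as noted in the discussion preceding the theorem --- a player who becomes \emph{more} pivotal under QV sees this cost weakly rise. (B) Straight from the definition of $\qv$, $\qv(\player,\tokens)=1$ holds iff $\player$'s relative vote-weight strictly increases under QV, i.e.\ $\sqrt{\tokens(\player)}/T_{\mathsf{quad}}>\tokens(\player)/T$; moreover a player with $\qv(\player,\tokens)=1$ is exactly one who becomes more pivotal.

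For the ($\Leftarrow$) direction I would take the given $\hat{\players}$ and have the briber bribe precisely it. Since $U_{E,\player}=U_{E',\player}$ for every $\player\in\hat{\players}$, (A) gives that each bribe cost is the same with and without QV, so the plan succeeds in both settings at the same expenditure $t$. The controlled fractions are then $f=\sum_{\player\in\hat{\players}}\tokens(\player)/T$ and $f'=\sum_{\player\in\hat{\players}}\sqrt{\tokens(\player)}/T_{\mathsf{quad}}$, and summing the strict per-player inequality supplied by (B) over $\hat{\players}$ yields $f<f'$.

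For the ($\Rightarrow$) direction, suppose a plan with bribed set $\hat{\players}$ witnesses $f<f'$; in particular this plan, at its fixed expenditure, must still succeed under QV, so each bribed player's QV bribe cost is at most the payment it receives, namely its original cost. For any $\player\in\hat{\players}$ with $\qv(\player,\tokens)=1$ this combines with (A) --- which says its QV cost is at least its original cost --- to force equality of costs, hence $U_{E,\player}=U_{E',\player}$. On the other hand $f<f'$ means $\sum_{\player\in\hat{\players}}\big(\sqrt{\tokens(\player)}/T_{\mathsf{quad}}-\tokens(\player)/T\big)>0$, so by (B) at least one $\player\in\hat{\players}$ has $\qv(\player,\tokens)=1$. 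The set $\hat{\players}^{\star}\coloneqq\{\player\in\hat{\players}:\qv(\player,\tokens)=1\}$ is therefore nonempty, and by the previous step each of its members also satisfies $U_{E,\player}=U_{E',\player}$, so $\hat{\players}^{\star}$ is exactly the witness the right-hand side asks for. I expect the main obstacle to be the bookkeeping hidden in ``some fixed expenditure $t$'': one has to argue carefully that a plan which succeeds in $E'$ at the same cost it had in $E$ cannot have raised the bribe cost of any bribed player, as this is precisely what upgrades the pivotality fact in (A) to genuine utility invariance for the players in $\hat{\players}^{\star}$. Secondary care is needed for the $\max(\cdot,0)$ clipping in the bribe-cost formula (so that equal cost implies equal utility on the relevant range) and for the case where the briber is itself a tokenholder whose own weight is rescaled under QV --- both handled exactly as in the other bribery results of this section.
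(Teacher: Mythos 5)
The paper itself does not prove this theorem: it is one of the results whose proof is explicitly deferred to the extended version (``Due to space constraints, we defer some of our proofs to the extended version of the paper''), and only the surrounding discussion plus the corollary's informal justification appear in the text. There is therefore no in-paper argument to compare yours against, and I can only assess your reconstruction on its own terms.

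Your reading of the underspecified quantities $f, f'$ as the weight controlled by a single committed bribery plan is a defensible formalization (under a ``maximum over all plans of cost at most $t$'' reading the equivalence would be much harder and likely false, since the optimal plan without quadratic voting may target whales and need not resemble the optimal plan with it), and your ($\Leftarrow$) direction is correct given that $\hat{\players}$ is nonempty --- a caveat worth making explicit, since the right-hand side as literally written is vacuously witnessed by $\hat{\players}=\emptyset$. The genuine gap is in ($\Rightarrow$): you assert that because $f<f'$, the plan ``must still succeed under QV, so each bribed player's QV bribe cost is at most the payment it receives.'' That does not follow. If some bribed player's utility rises under quadratic voting, its bribe cost rises, the fixed payment no longer suffices, and that player simply drops out of the controlled set; $f'$ can nonetheless exceed $f$ because the relative weights of the players who remain bribed have grown. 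In that scenario your witness set $\hat{\players}^{\star}$ (all originally bribed players with $\qv(\player,\tokens)=1$) may contain a player whose utility did change, breaking the conclusion. The repair is to take the witness to be the set of players who remain controlled under quadratic voting and satisfy $\qv(\player,\tokens)=1$: for these, the cost comparison you describe does force $U_{E,\player}=U_{E',\player}$, and such a player must exist, since if every player still controlled under QV had $\qv(\player,\tokens)=0$ then the QV-controlled fraction would be at most the corresponding non-QV fraction over that subset, which is at most $f$, contradicting $f<f'$. With that adjustment your argument goes through, still resting, as you note, on the paper's informal premise that a player made more pivotal by quadratic voting has weakly increased utility and hence weakly increased bribe cost.
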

\normalsize

This result thus shows that quadratic voting may be favorable for a bribing entity. In particular, if there are enough small voters whose utilities are unchanged, the cost to guarantee successful bribery decreases:
\begin{corollary}
Assume that, for $\hat{P}$ as defined in Theorem~\ref{thm:quadratic-voting}, $\tokens(\hat{P}) > q \cdot \sum_{\player \in \players} \tokens(P)$. Let $t$ and $t''$ be the expenditure required to guarantee an outcome of $\true$ in elections $E$ and $E'$, respectively. Then, it follows that $t' < t$.
\end{corollary}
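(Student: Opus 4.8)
The plan is to read this corollary as the ``expenditure'' restatement of Theorem~\ref{thm:quadratic-voting}, which already tells us that, under exactly the hypothesis assumed here, a fixed bribery budget buys strictly more vote share once quadratic voting is in place. First I would check that the hypotheses line up: the set $\hat{P}$ named in the corollary is precisely a witness for the right-hand side of the biconditional in Theorem~\ref{thm:quadratic-voting} (every $\player\in\hat{P}$ satisfies $\qv(\player,\tokens)=1$ and $U_{E,\player}=U_{E',\player}$), so that theorem applies: for a fixed expenditure the briber controls a strictly larger fraction $f'>f$ of the vote under quadratic voting than without, and I would take that ``fixed expenditure'' to be the corollary's $t$ so the two statements compose.

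Second, I would upgrade the token-weight hypothesis $\tokens(\hat{P})>q\cdot\sum_{\player\in\players}\tokens(\player)$ to its quadratic-voting analogue. Unwinding $\qv(\player,\tokens)=1$ gives $\tokens(\player)\big/\sum_{p}\tokens(p)<\sqrt{\tokens(\player)}\big/\sum_{p}\sqrt{\tokens(p)}$ for each $\player\in\hat{P}$; summing over $\hat{P}$ and substituting the hypothesis yields $\big(\sum_{\player\in\hat{P}}\sqrt{\tokens(\player)}\big)\big/\sum_{p}\sqrt{\tokens(p)}>q$, i.e.\ $\hat{P}$ remains a winning coalition under quadratic voting, now with strict slack above the threshold $q$. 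I would pair this with the immediate observation that the cost of flipping a vote, $\max(2\cdot\util_{\player}(e,\cdot)+\epsilon,0)$, depends only on a player's utility, so since members of $\hat{P}$ have identical utilities in $E$ and $E'$, the cost of corrupting $\hat{P}$ is the same in both regimes; and moreover every genuine flip costs at least $\epsilon>0$.

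Putting these together: instantiating Theorem~\ref{thm:quadratic-voting} at the budget $t$ (the minimum needed to guarantee $\true$ without quadratic voting, at which the briber controls a fraction $f\ge q$), we get that the same budget $t$ controls $f'>f\ge q$ under quadratic voting, so the threshold is crossed with a strict surplus. Hence $t$ is not minimal in $E'$: I would trim the winning coalition by dropping its lightest corrupted vote, which frees up at least $\epsilon$ of budget while, by the strict-slack computation above, still leaving a coalition that clears $q$ --- most cleanly argued against the concrete set $\hat{P}$, whose quadratic share strictly exceeds $q$. A strictly smaller expenditure therefore guarantees $\true$ in $E'$, giving $t'<t$.

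The step I expect to be the main obstacle is exactly this trimming move: converting ``the threshold is crossed with strict surplus at budget $t$'' into ``strictly less than $t$ suffices.'' In a continuous idealization of expenditures this is immediate, but in the discrete model one must exhibit a removable corrupted vote whose weight the surplus can absorb, and a cost-minimizing winning bundle could in principle be inclusion-minimal; the safe route is to do the trimming against $\hat{P}$ (where the surplus is guaranteed by the computation of step two) rather than against an abstract optimal bundle, or to assume, as is standard in such models, that the marginal bribe lands exactly at the threshold so that $f=q$. I would flag this as the one place the argument needs care, with everything else being bookkeeping built on Theorem~\ref{thm:quadratic-voting}.
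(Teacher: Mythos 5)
Your proposal is correct in substance but runs the comparison in the opposite direction from the paper, and that choice is exactly what creates the obstacle you flag. The paper instantiates Theorem~\ref{thm:quadratic-voting} at the expenditure $t'$ (the minimum that guarantees \true\ in $E'$): by the theorem, that same expenditure controls only a fraction $q-\epsilon < q$ of the vote in $E$, so it falls \emph{strictly short} of the threshold there, and hence $t > t'$ with no further work --- strictness is automatic because falling short of a threshold forces strictly more spending. You instead instantiate at $t$, observe that it \emph{overshoots} the threshold in $E'$, and must then convert ``surplus above $q$'' into ``strictly less than $t$ suffices,'' which is the trimming step you correctly identify as delicate (and which your fix via the concrete coalition $\hat{P}$ only bounds $t'$ by the cost of corrupting $\hat{P}$, not obviously by something below $t$). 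Your intermediate computation that $\hat{P}$'s quadratic-weight share strictly exceeds $q$ is a nice verification the paper does not spell out, but it is not needed on the paper's route. Both arguments share the same implicit idealization --- that at the minimal guaranteeing expenditure the controlled fraction sits essentially at the threshold --- so you are not being more cavalier than the paper; you have just picked the direction that makes the bookkeeping harder. If you rewrite your argument starting from $t'$ rather than $t$, the trimming issue disappears and you recover the paper's one-line proof.
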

This corollary simply follows from the fact that, as proved in Theorem~\ref{thm:quadratic-voting}, the expenditure $t'$ required to control a fraction of $q$ votes in $E'$, and thus guarantee successful bribery in $E'$, would only be enough to acquire a fraction of $q - \epsilon$ votes in $E$. As such, some additional expenditure is required to cross the threshold of $q$ votes.

\end{document}